\newcommand{\myLambda}{\begin{sideways}%
     \begin{sideways}$\mathrm{V}$\end{sideways}\end{sideways}}
\newtheorem{lemma}{Lemma}
\newtheorem{theorem}{Theorem}
\newtheorem{remark}{Remark}
\newtheorem{proposition}{Proposition}
\newcommand{\removelatexerror}{\let\@latex@error\@gobble}
\newcommand{\proofpart}[2]{%
	\par
	\addvspace{\medskipamount}%
	\noindent\emph{Part #1: #2}\par\nobreak
	\addvspace{\smallskipamount}%
	\@afterheading
}
\newcommand*{\transpose}{%
  {\mathpalette\@transpose{}}%
}
\newcommand*{\@transpose}[2]{%
  % #1: math style
  % #2: unused
  \raisebox{\depth}{$\m@th#1\intercal$}%
}
\renewcommand{\mathsf}[1]{#1}
\theoremstyle{definition}
\begin{document}

\newcommand{\SB}[3]{
\sum_{#2 \in #1}\biggl|\overline{X}_{#2}\biggr| #3
\biggl|\bigcap_{#2 \notin #1}\overline{X}_{#2}\biggr|
}

\newcommand{\Mod}[1]{\ (\textup{mod}\ #1)}

\newcommand{\overbar}[1]{\mkern 0mu\overline{\mkern-0mu#1\mkern-8.5mu}\mkern 6mu}

\makeatletter
\newcommand*\nss[3]{%
  \begingroup
  \setbox0\hbox{$\m@th\scriptstyle\cramped{#2}$}%
  \setbox2\hbox{$\m@th\scriptstyle#3$}%
  \dimen@=\fontdimen8\textfont3
  \multiply\dimen@ by 4             % 4x the default rule thickness
  \advance \dimen@ by \ht0
  \advance \dimen@ by -\fontdimen17\textfont2
  \@tempdima=\fontdimen5\textfont2  % x-height
  \multiply\@tempdima by 4
  \divide  \@tempdima by 5          % 80% of the x-height
  % Modifications are only necessary if the top of the subscript is not that high:
  \ifdim\dimen@<\@tempdima
    \ht0=0pt                        % don't let the subscript interfere
    \@tempdima=\fontdimen5\textfont2
    \divide\@tempdima by 4          % 25% of the x-height
    \advance \dimen@ by -\@tempdima % if >0, add to depth of superscript!
    \ifdim\dimen@>0pt
      \@tempdima=\dp2
      \advance\@tempdima by \dimen@
      \dp2=\@tempdima
    \fi
  \fi
  #1_{\box0}^{\box2}%
  \endgroup
  }
\makeatother

\makeatletter
\renewenvironment{proof}[1][\proofname]{\par
  \pushQED{\qed}%
  \normalfont \topsep6\p@\@plus6\p@\relax
  \trivlist
  \item[\hskip\labelsep
        \itshape
%    #1\@addpunct{.}]\ignorespaces% DELETED
    #1\@addpunct{:}]\ignorespaces% ADDED
}{%
  \popQED\endtrivlist\@endpefalse
}
\makeatother

\makeatletter
\newsavebox\myboxA
\newsavebox\myboxB
\newlength\mylenA

\newcommand*\xoverline[2][0.75]{%
    \sbox{\myboxA}{$\m@th#2$}%
    \setbox\myboxB\null% Phantom box
    \ht\myboxB=\ht\myboxA%
    \dp\myboxB=\dp\myboxA%
    \wd\myboxB=#1\wd\myboxA% Scale phantom
    \sbox\myboxB{$\m@th\overline{\copy\myboxB}$}%  Overlined phantom
    \setlength\mylenA{\the\wd\myboxA}%   calc width diff
    \addtolength\mylenA{-\the\wd\myboxB}%
    \ifdim\wd\myboxB<\wd\myboxA%
       \rlap{\hskip 0.5\mylenA\usebox\myboxB}{\usebox\myboxA}%
    \else
        \hskip -0.5\mylenA\rlap{\usebox\myboxA}{\hskip 0.5\mylenA\usebox\myboxB}%
    \fi}
\makeatother

\xpatchcmd{\proof}{\hskip\labelsep}{\hskip3.75\labelsep}{}{}

\pagestyle{plain}

\title{\fontsize{21}{28}\selectfont Single-Server Private Linear Transformation:\\ The Joint Privacy Case}

\author{Anoosheh Heidarzadeh, \emph{Member, IEEE}, Nahid Esmati, \emph{Student Member, IEEE},\\ and Alex Sprintson, \emph{Senior Member, IEEE}\thanks{This work is to be presented in part at the 2021 IEEE International Symposium on Information Theory, Melbourne, Australia, July 2021.}\thanks{The authors are with the Department of Electrical and Computer Engineering, Texas A\&M University, College Station, TX 77843 USA (E-mail: \{anoosheh, nahid, spalex\}@tamu.edu).}}

%\thanks{This material is based upon work supported by the National Science Foundation (NSF) under Grants No.~1718658 and 1642983.}

% 

%\thanks{This work was supported by the National Science Foundation under Grant No.~CNS-0954153 and the AFOSR under Contract No.~FA9550-13-1-0008.}

\maketitle 

\thispagestyle{plain}

\begin{abstract} 
This paper introduces the problem of Private Linear Transformation (PLT) which generalizes the problems of private information retrieval and private linear computation. 
The PLT problem includes one or more remote server(s) storing (identical copies of) $K$ messages and a user who wants to compute $L$ independent linear combinations of a $D$-subset of messages. 
The objective of the user is to perform the computation by downloading minimum possible amount of information from the server(s), while protecting the identities of the $D$ messages required for the computation. 
In this work, we focus on the single-server setting of the PLT problem when the identities of the $D$ messages required for the computation must be protected jointly. 
We consider two different models, depending on whether the coefficient matrix of the required $L$ linear combinations generates a Maximum Distance Separable (MDS) code. 
We prove that the capacity for both models is given by $L/(K-D+L)$, where the capacity is defined as the supremum of all achievable download rates. 
Our converse proofs are based on linear-algebraic and information-theoretic arguments that establish connections between PLT schemes and linear codes. 
We also present an achievability scheme for each of the models being considered. 
\end{abstract}

\vspace{-0.125cm}
\begin{IEEEkeywords}
Private Information Retrieval, Private Function Computation, Information-Theoretic Privacy, Single Server, Linear Transformation, Maximum Distance Separable Codes. 
\end{IEEEkeywords}

% ---tailored to the single-server setting---

%The goal of the user is to perform this computation by downloading minimum amount of information from the server, while protecting the privacy of the identities of the $D$ messages required for the computation from the server. 

%In addition, we discuss several extensions of the PLT problem, including PLT in the presence of side information, and     

%\vspace{-0.01cm}
\section{introduction}
In this work, %we consider the problem of private computation of multiple linear combinations which we refer to as 
we introduce the problem of \emph{Private Linear Transformation (PLT)}. 
This problem includes one or more remote server(s) %(or multiple remote servers that can collude arbitrarily) 
storing (identical copies of) a dataset consisting of $K$ messages; and a user who is interested in computing $L$ independent linear combinations of a $D$-subset of messages. 
The objective of the user is to recover the $L$ required linear combinations by downloading minimum possible amount of information from the server(s), while the identities of the $D$ messages required for the computation are not revealed to
%privately so that the identities of the $D$ data items required for the computation are not revealed to %(to some degree) 
the server(s). %, while minimizing the total amount of information downloaded from the server(s).
%Single-server and multi-server settings of PLT 
%This problem is motivated by the application of linear transformation for dimensionality reduction in machine learning. 
% SJ2018No2
The PLT problem generalizes the problems of Private Information Retrieval (PIR)~\cite{CGKS1995,SJ2017,BU17,BU2018,TSC2019,CWJ2018,MMM2019,SS2021} and Private Linear Computation (PLC)~\cite{SJ2018,MM2018}, which have recently received a significant attention from the research community.
% TM2019No2
In particular, PLT reduces to the PIR problem or the PLC problem when $L=D$ or $L=1$, respectively. 
This is because in PIR, the problem is to privately retrieve \emph{a $D$-subset of messages}, which is equivalent to privately computing \emph{$D$ independent linear combinations} of the $D$ desired messages; and in PLC, the problem is to privately compute \emph{one linear combination} of a $D$-subset of messages. 

The PLT problem is motivated by the need to protect the data access patterns in several Machine Learning (ML) applications such as linear transformation for dimensionality reduction, see, e.g.,~\cite{CG2015}, and training different linear regression or classification models in parallel, see, e.g.,~\cite{A2019,JY2020}. % and references therein. 
For instance, consider a dataset with $N$ data samples each with $K$ attributes, represented by a $K\times N$ data matrix. 
Suppose there is a user who wishes to implement an ML algorithm on a subset of $D$ selected attributes, while protecting the privacy of the selected attributes.  
When $D$ is large, the $D$-dimensional feature space is typically mapped onto a new subspace of lower dimension, say, $L$, and the ML algorithm operates on the new $L$-dimensional subspace instead. 
A commonly-used technique for dimensionality reduction is \emph{linear transformation}, where an $L\times D$ matrix is multiplied by the $D\times N$ submatrix of the $K\times N$ data matrix restricted to the $D$ selected attributes. 
This scenario matches the setup of the PLT problem, in which each message represents the $N$ data samples for one attribute, the labels of the selected attributes correspond to the identities of the messages required for the computation, and the transformation matrix is formed by the coefficient matrix of the required linear combinations.

In many practical scenarios, the dataset is stored on a single server, or multiple servers that belong to the same provider and can collude arbitrarily. 
Motivated by such scenarios, in this work we focus on the single-server setting of the PLT problem. 
A simple approach for PLT is to privately retrieve the messages required for the computation using a single-server PIR scheme, and then compute the required linear combinations locally. 
As shown in~\cite{KGHERS2020,HKS2019Journal,HKRS2019,KKHS32019,HKGRS2018,LG2018,HKS2018,HKS2019}, leveraging a prior side information about the dataset, in the single-server setting, the user can retrieve a single or multiple messages privately with a much lower download cost than the trivial scheme of downloading the entire dataset. 
(The advantages of side information in multi-server PIR were also studied in~\cite{T2017,WBU2018,WBU2018No2,CWJ2020,SSM2018,KKHS22019,KKHS12019,KH2021}.) %when the user initially has access to a subset of data items or a linear combination of them as a side information and the identities of these items are initially unknown to the server 
However, when there is no side information, a PIR-based approach is extremely expensive as the entire dataset must be downloaded in order to achieve information-theoretic privacy~\cite{CGKS1995}. 
%the PIR-based approach is, however, extremely expensive since, in the absence of any side information, all items in the dataset must be downloaded in order to achieve privacy~\cite{}. 
%Recently, similar results were also shown for single-server PLC~\cite{}. 
Another approach for PLT is to privately compute the required linear combinations separately via applying a single-server PLC scheme multiple times. 
(The multi-server PLC problem and its extensions were studied in~\cite{SJ2018,MM2018,OK2018,OLRK2018,TM2019No2,OLRK2020}.)
In~\cite{HS2019PC,HS2020}, it was shown that PLC can be performed more efficiently than PIR in terms of the download cost, regardless of whether the user has any side information or not. %This suggests that a PLC-based PLT scheme can outperform a PIR-based PLT scheme; 
However, a PLC-based approach may still lead to an unnecessary overhead due to the redundancy in the information being downloaded. 
This implies the need for novel PLT schemes with optimal download rate.

Different types of privacy can be considered for PLT. 
In this work, we focus on the PLT problem under a strong notion of privacy, called \emph{joint privacy}, which was also considered previously for PIR and PLC (see, e.g.,~\cite{BU2018,HKGRS2018,HS2020,ZTSP2021}).
We refer to this problem as \emph{PLT with Joint Privacy}, or \emph{JPLT} for short. 
The joint privacy requirement implies that 
%consider the strong notion of \emph{joint privacy}, %two notions of information-theoretic %information-theoretic 
%privacy: (i) \emph{joint privacy}---
%The joint privacy requirement implies that 
%any subset of data items of size $D$ must be equally likely to be the support set of the required linear combinations.
the identities of all $D$ messages required for the computation must be kept private jointly. %, and no information about the correlation between them must be leaked. %~\cite{BU2018,HKGRS2018,HS2020}. 
%We refer to this problem as \emph{PLT with Joint Privacy}, or \emph{JPLT} for short. In a parallel submission~\cite{EHS2021Individual}, we have considered a weaker privacy condition, called \emph{individual privacy}, where the identity of every individual item required for the computation must be kept private, but some information about the correlation between these items may be leaked. %Note that joint privacy implies individual privacy, but not vice versa. 
%\footnote{}   
%the identities of the data items in the support set of the demanded linear combinations, %combinations~\cite{HKGRS2018,LG2018,BU17,BU2018,HS2020}; %We refer to this problem as \emph{Jointly-Private Linear Transformation (JPLT)}. 
%We measure the efficiency of a JPLT scheme by its \emph{download rate}, and 
This type of privacy is of practical importance in the scenarios in which the correlation between the identities of messages required for the computation need to be kept private. 
For instance, the user may want to compute a linear combination of two vectors, and the server must not learn which pair of vectors were required for the computation. 

In a parallel work~\cite{HES2021IndividualJournal}, we have considered a relaxed version of joint privacy, called \emph{individual privacy}, which was recently introduced for PIR and PLC (see, e.g.,~\cite{HKRS2019,HS2020}). 
The individual privacy condition ensures that the identity of every individual message required for the computation is kept private. 
In contrast to joint privacy, individual privacy finds application in the scenarios in which the correlation between the identities of the required messages does not need to be protected. 
For example, the dataset may contain information about individuals, and the user is required to hide information from the server on whether the data belonging to an individual was used in the computation. 
%In a parallel work~\cite{}, we have considered a weaker notion of privacy, called \emph{individual privacy}, where every message index must be equally likely to belong to the demand's support index set. 
   
Unlike the privacy requirements for the multi-server PLC problem in~\cite{SJ2018,MM2018} and the multi-server Private Monomial Computation problem in~\cite{YLR2020}, % (which is asymptotically equivalent to the multi-server PLC problem as the field size grows unbounded), 
joint and individual privacy are to protect the data access patterns, and not the values of the coefficients (or the exponents) in the required linear combination (or the required monomial function). 
These types of access privacy are inspired by several real-world scenarios. 
For example, protecting the identities of the selected attributes %(jointly or individually) 
in the application of linear transformation for dimensionality reduction may prevent the server from learning the user's data access patterns which, in turn, can be instrumental for hiding user's algorithms, preferences, and objectives from the server. %, to a satisfactory degree.

\subsection{Main Contributions}
We consider two different models, referred to as \emph{Model~I} and \emph{Model~II}, for the JPLT problem.
In Model~I, it is assumed that the coefficient matrix of the required linear combinations is maximum distance separable (MDS), whereas in Model~II, it is assumed that the coefficient matrix has full rank (but it may or may not be MDS).\footnote{A $k\times n$ matrix is said to be MDS iff it generates an $[n,k]$ MDS code.} % (but it may or may not generate an MDS code).
Model~I is motivated by the scenarios in which the combination coefficients are chosen purposely to form an MDS matrix, %, e.g., in the design of codes for distributed storage~\cite{RVBKSK2020}, 
or the coefficient matrix is randomly generated over the field of real numbers or a finite field of large size,\footnote{A direct application of Schwartz-Zippel lemma~\cite{S1980,Z1979} shows that a matrix whose entries are randomly chosen from a sufficiently large field is MDS with high probability.} e.g., when applying random linear transformation for dimensionality reduction~\cite{BM2001}.
Model~II, on the other hand, finds application in the scenarios in which the size of the operating field is relatively small, e.g., due to the computational complexity considerations, and the number of rows ($L$) and the number of columns ($D$) of the coefficient matrix are such that $\binom{D}{L}$ is large, e.g., when a large reduction factor is required in dimensionality reduction.
We refer to the JPLT problem under Model~I or Model~II as the \emph{JPLT-I} or \emph{JPLT-II} problem, respectively.

In this work, we characterize the capacity of the JPLT-I and JPLT-II problems, where the capacity of JPLT-I (or JPLT-II) is defined as the supremum of download rates over all JPLT-I (or JPLT-II) schemes.
In particular, we prove that the capacity of both problems is given by $L/(K-D+L)$.
This result is particularly interesting because it shows that JPLT can be performed more efficiently than applying a PIR-based or a PLC-based approach for privately computing multiple linear combinations simultaneously.  
For each problem, we prove the converse by using a mix of linear-algebraic and information-theoretic arguments. %, relying primarily on a necessary condition for JPLT-I or JPLT-II schemes. 
Our technique for proving the converse for the JPLT-II problem is more general and is applicable to the JPLT-I problem. 
However, this technique is based on proof-by-contradiction. 
On the other hand, our proof technique for the JPLT-I problem is a constructive proof which also gives insight into the design of an achievability scheme. 
For the JPLT-I problem, we propose an achievability scheme, termed the \emph{Specialized MDS Code protocol}, which is based on the idea of extending an MDS code.\footnote{Extending a code is performed by adding new columns to the generator matrix of the code.} 
For the JPLT-II problem, we propose a different achievability scheme, termed the \emph{Specialized Augmented Code protocol}. 
This scheme is based on augmenting a non-MDS code by an MDS code.\footnote{Augmenting a code is performed by adding new rows to the generator matrix of the code.}

\subsection{Notation}
We denote random variables and their realizations by bold-face and regular symbols, respectively. 
We denote sets, vectors, and matrices by roman font, %(\texttt{\textbackslash mathrm}), 
and denote collections of sets, vectors, or matrices by blackboard bold roman font. %(\texttt{\textbackslash mathbbm}). %, and fields are denoted by blackboard bold sans serif font (\texttt{\textbackslash mathbbmss}). 
%For any matrix $\mathrm{M}$, we denote by $\mathrm{rank}(\mathrm{M})$ the rank of the matrix $\mathrm{M}$. 
%For any events $E_1,E_2$, we denote by $\Pr(E_1)$ and $\Pr(E_1|E_2)$ the probability of $E_1$ and the conditional probability of $E_1$ given $E_2$, respectively. 
For any random variables $\mathbf{X},\mathbf{Y}$, we denote by $H(\mathbf{X})$ and $H(\mathbf{X}|\mathbf{Y})$ the entropy of $\mathbf{X}$ and the conditional entropy of $\mathbf{X}$ given $\mathbf{Y}$, respectively. 
For any integer $n\geq 1$, we denote $\{1,\dots,n\}$ by $[n]$, and for any integers $1<n<m$, we denote $\{n,n+1,\dots,m\}$ by $[n:m]$.
We denote the binomial coefficient $\binom{n}{k}$ by $C_{n,k}$.  
%Throughout this paper, random variables and their realizations are denoted by bold-face symbols (e.g., $\mathbf{X},\mathbf{W}$) and non-bold-face symbols (e.g., $\mathrm{X},\mathrm{W}$), respectively. 

\section{Problem Setup}\label{sec:SN}
\subsection{Models and Assumptions}
Let $q$ be an arbitrary prime power, and let $N\geq 1$ be an arbitrary integer. 
Let $\mathbbmss{F}_q$ be a finite field of order $q$, and let $\mathbbmss{F}_{q}^{N}$ be the $N$-dimensional vector space over $\mathbbmss{F}_q$.
Let $B\triangleq N\log_2 q$. 
Let $K,D,L\geq 1$ be integers such that ${L\leq D\leq K}$. 
%Let ${[K]\triangleq \{1,\dots,K\}}$ be the set of integers from $1$ to $K$. 
We denote by $\mathbbm{W}$ the set of all $D$-subsets (i.e., all subsets of size $D$) of $[K]$. %denote by $\mathscr{V}$ the collection of all $L\times D$ matrices (with entries from $\mathbbmss{F}_p$) with $L$ linearly independent rows and $D$ nonzero columns. 
Also, we denote by $\mathbbm{V}_{I}$ the set of all $L\times D$ matrices $\mathrm{V}$ with entries in $\mathbbmss{F}_q$ that are MDS, i.e., every $L\times L$ submatrix of $\mathrm{V}$ is invertible, and denote by $\mathbbm{V}_{I\hspace{-0.04cm}I}$ the set of all $L\times D$ matrices $\mathrm{V}$ with entries in $\mathbbmss{F}_q$ that have full rank, i.e., $\mathrm{rank}(\mathrm{V}) = L$. 

Consider a server that stores $K$ messages ${X_1,\dots,X_K}$, where $X_i\in \mathbbmss{F}_q^{N}$ for $i\in [K]$ is a row-vector of length $N$. 
Let ${\mathrm{X}\triangleq [X_1^{\transpose},\dots,X_K^{\transpose}]^{\transpose}}$. Note that $\mathrm{X}$ is a matrix of size $K\times N$.
For every ${\mathrm{S}\subset [K]}$, we denote by $\mathrm{X}_{\mathrm{S}}$ the submatrix of $\mathrm{X}$ restricted to its rows indexed by $\mathrm{S}$, i.e., $\mathrm{X}_{\mathrm{S}} = [X_{i_1}^{\transpose},\dots,X_{i_{s}}^{\transpose}]^{\transpose}$, where ${\mathrm{S} = \{i_1,\dots,i_{s}\}}$. 
Note that $\mathrm{X}_{\mathrm{S}}$ is a matrix of size $|\mathrm{S}|\times N$, where $|\mathrm{S}|$ denotes the size of $\mathrm{S}$.
Consider a user who wishes to compute $L$ linear combinations of $D$ messages, namely, $\mathrm{v}_1 \mathrm{X}_{\mathrm{W}},\dots,\mathrm{v}_L \mathrm{X}_{\mathrm{W}}$, where $\mathrm{W}\in \mathbbm{W}$ is the index set of the $D$ messages required for the computation, and $\mathrm{v}_l$ for each $l\in [L]$ is a row-vector of length $D$ with entries in $\mathbbmss{F}_q$, denoting the coefficient vector of the $l$th required linear combination. 
We represent the collection of the required linear combinations in the matrix form as $\mathrm{Z}^{[\mathrm{W},\mathrm{V}]}\triangleq \mathrm{V}\mathrm{X}_{\mathrm{W}}=\mathrm{U}\mathrm{X}$, where ${\mathrm{V}= [\mathrm{v}_{1}^{\transpose},\dots,\mathrm{v}_{L}^{\transpose}]^{\transpose}}$ is an $L\times D$ matrix with entries in $\mathbbmss{F}_q$, denoting the coefficient matrix pertaining to the required linear combinations, and $\mathrm{U}$ is an $L\times K$ matrix such that the submatrix of $\mathrm{U}$ restricted to the columns indexed by $\mathrm{W}$ is equal to $\mathrm{V}$, and the rest of the columns of $\mathrm{U}$ are all-zero. 
Note that $\mathrm{Z}^{[\mathrm{W},\mathrm{V}]}$ is a matrix of size $L\times N$ with entries in $\mathbbmss{F}_q$. 
We refer to $\mathrm{Z}^{[\mathrm{W},\mathrm{V}]}$ as the \emph{demand}, $\mathrm{W}$ as the \emph{support of the demand}, $\mathrm{V}$ as the \emph{coefficient matrix of the demand}, $\mathrm{U}$ as the \emph{global coefficient matrix of the demand}, $D$ as the \emph{support size of the demand}, and $L$ as the \emph{dimension of the demand}.  

In this work, we consider two different models:
\begin{itemize}
\item \emph{Model~I:} $\mathrm{v}_l \mathrm{X}_{\mathrm{W}}$'s are $L$ \emph{MDS-coded} linear combinations of the $D$ messages indexed by $\mathrm{W}$, i.e., $\mathrm{V}\in \mathbbm{V}_{I}$. 
%This model is motivated by the scenarios in which either the combination coefficients are chosen purposely to form an MDS matrix, e.g., in the design of codes for distributed storage, or the coefficient matrix is randomly generated over the field of real numbers or a finite field of large size, e.g., when applying random linear transformation for dimensionality reduction. 
%In particular, using the Schwartz–Zippel lemma, it can be shown that the larger is the field size $q$, the higher is the probability that a randomly generated transformation matrix with entries in $\mathbbmss{F}_q$ is MDS.
\item \emph{Model~II:} $\mathrm{v}_l \mathrm{X}_{\mathrm{W}}$'s are $L$ \emph{linearly independent} (but not necessarily MDS-coded) linear combinations of the $D$ messages indexed by $\mathrm{W}$, i.e., $\mathrm{V}\in \mathbbm{V}_{I\hspace{-0.04cm}I}$. 
%This model finds application in the scenarios in which the size of the operating field is relatively small, e.g., due to the computational complexity considerations, and the number of rows ($L$) and the number of columns ($D$) of the coefficient matrix are such that $\binom{D}{L}$ is relatively large, e.g., when a very large reduction factor is required in dimensionality reduction. 
\end{itemize}

\begin{comment}
This setup appears in several practical scenarios such as random linear transformation for dimensionality reduction in machine learning (see, e.g.,~\cite{CG2015} and references therein). 
Consider a dataset with $N$ data samples, each with $K$ attributes. 
Consider a user who wishes to implement a machine learning (ML) algorithm on a subset of $D$ selected attributes.
When $D$ is large, the $D$-dimensional feature space is typically mapped onto a new subspace of lower dimension, say, $L$, and the ML algorithm operates on the new $L$-dimensional subspace instead. 
A commonly-used technique for dimensionality reduction is \emph{linear transformation}, where an $L\times D$ matrix is multiplied by the $D\times N$ data submatrix (the submatrix of the original $K\times N$ data matrix restricted to the $D$ selected attributes). 
Thinking of the rows of the $K\times N$ data matrix as $X_i$'s, the labels of the $D$ selected attributes as $\mathrm{W}$, and the $L\times D$ matrix used for linear transformation as $\mathrm{V}$, this scenario matches the setup defined earlier. In addition, %the assumption of an MDS coefficient matrix is 
MDS coefficient matrices are motivated by the application of \emph{random} linear transformation for dimensionality reduction, where a random $L\times D$ matrix is used for transformation. 
In particular, using the Schwartz–Zippel lemma, it can be shown that the larger is the field size $p$, the higher is the probability that every $L\times L$ submatrix of an $L\times D$ matrix whose entries are randomly chosen from $\mathbbmss{F}_p$ is invertible. %Henceforth, with a slight abuse of notation, we denote by $\mathscr{V}$ the set of all MDS matrices of size $L\times D$. 
\end{comment}

Throughout, we make the following assumptions: 
\begin{enumerate}
\item $\mathbf{X}_1,\dots,\mathbf{X}_K$ are independently and uniformly distributed over $\mathbbmss{F}_{q}^{N}$. 
%Thus, ${H(\mathbf{X}_{i})=B}$ for $i\in [K]$, where $B\triangleq \log_2 q$, and more generally, 
Thus, $H(\mathbf{X})=KB$, and ${H(\mathbf{X}_{\mathrm{S}})= |\mathrm{S}| B}$ for every ${\mathrm{S}\subset [K]}$. 
Moreover, $H(\mathbf{Z}^{[\mathrm{W},\mathrm{V}]})=LB$ for Model~I and Model~II.
\item $\mathbf{W}, \mathbf{V}, \mathbf{X}$ are independent random variables. 
\item $\mathbf{W}$ is distributed uniformly over ${\mathbbm{W}}$. 
\item $\mathbf{V}$ is distributed uniformly over ${\mathbbm{V}_{I}}$ or ${\mathbbm{V}_{I\hspace{-0.04cm}I}}$ for Model~I or Model~II, respectively. % that are \emph{Maximum Distance Separable (MDS)}, i.e., every $L\times L$ submatrix of $\mathrm{V}$ is invertible; 
\item The demand's support size $D$ and dimension $L$, %the underlying model (i.e., whether $\mathbf{V}$ is distributed uniformly over $\mathbbm{V}_{I}$ or $\mathbbm{V}_{I\hspace{-0.04cm}I}$), 
and the distributions of $\mathbf{W}$ and $\mathbf{V}$ are initially known by the server, whereas the realizations $\mathrm{W}$ and $\mathrm{V}$ are initially unknown to the server.
\end{enumerate}

\subsection{Privacy and Recoverability Conditions}
Given $\mathrm{W}$ and $\mathrm{V}$, the user generates a query $\mathrm{Q}^{[\mathrm{W},\mathrm{V}]}$, simply denoted by $\mathrm{Q}$, and sends it to the server. 
For simplicity, we denote $\mathbf{Q}^{[\mathbf{W},\mathbf{V}]}$ by $\mathbf{Q}$. 
The query $\mathrm{Q}$ is a deterministic or stochastic function of $\mathrm{W}, \mathrm{V}$. 
In the case of a deterministic query, $H(\mathbf{Q}|\mathbf{W},\mathbf{V})=0$, and in the case of a stochastic query, $H(\mathbf{Q}|\mathbf{W},\mathbf{V},\mathbf{R})=0$, where $\mathrm{R}$ is a random key generated by the user (independently from $\mathrm{W},\mathrm{V},\mathrm{X}$), and unknown to the server. 

%initially generated by the user---independently from $\mathrm{W},\mathrm{V},\mathrm{X}$, and initially unknown to the server.
%Here, $\mathbf{Q}$ denotes $\mathbf{Q}^{[\mathbf{W},\mathbf{V}]}$. 

%where $\mathbf{Q}^{[\mathbf{W},\mathbf{V}]}$ is denoted by $\mathbf{Q}$.

%, and potentially a random key $\mathrm{R}$ that is generated by the user in advance---independently from $\mathrm{W},\mathrm{V},\mathrm{X}$, and is initially unknown to the server. 
%That is, $H(\mathbf{Q}|\mathbf{W},\mathbf{V},\mathbf{R})=0$, where $\mathbf{Q}^{[\mathbf{W},\mathbf{V}]}$ is denoted by $\mathbf{Q}$. 

Given the query $\mathrm{Q}$, every $D$-subset of message indices must be equally likely to be the demand's support $\mathbf{W}$, i.e., for every $\tilde{\mathrm{W}} \in \mathbbm{W}$, it must hold that
\begin{equation*}
\Pr (\mathbf{W}=\tilde{\mathrm{W}}|\mathbf{Q}=\mathrm{Q})=\Pr(\mathbf{W}=\tilde{\mathrm{W}})={1}/{C_{K,D}}. 
\end{equation*} We refer to this condition as the \emph{joint privacy condition}.

Upon receiving the query $\mathrm{Q}$, the server generates an answer $\mathrm{A}^{[\mathrm{W},\mathrm{V}]}$, simply denoted by $\mathrm{A}$, and sends it back to the user. 
For simplicity, we denote $\mathbf{A}^{[\mathbf{W},\mathbf{V}]}$ by $\mathbf{A}$.
The answer $\mathrm{A}$ is a deterministic function of $\mathrm{Q}$ and $\mathrm{X}$.
That is, $H(\mathbf{A}|\mathbf{Q},\mathbf{X})=0$. %, where $\mathbf{A}^{[\mathbf{W},\mathbf{V}]}$ is denoted by $\mathbf{A}$. 

The answer $\mathrm{A}$, the query $\mathrm{Q}$, and the realizations $\mathrm{W}, \mathrm{V}$ must collectively enable the user to retrieve the demand
$\mathrm{Z}^{[\mathrm{W},\mathrm{V}]}$, i.e., 
\[H(\mathbf{Z}| \mathbf{A},\mathbf{Q}, \mathbf{W},\mathbf{V})=0,\] where $\mathbf{Z}^{[\mathbf{W},\mathbf{V}]}$ is denoted by $\mathbf{Z}$. 
We refer to this condition as the \emph{recoverability condition}. %Note that $H(\mathbf{Z})=LB$.

\subsection{Problem Statement}
The problem is to design a protocol for generating a query $\mathrm{Q}^{[\mathrm{W},\mathrm{V}]}$ and the corresponding answer $\mathrm{A}^{[\mathrm{W},\mathrm{V}]}$ for any given $\mathrm{W}$ and $\mathrm{V}$
such that the joint privacy and recoverability conditions are satisfied.
We refer to this problem as single-server \emph{Private Linear Transformation (PLT) with Joint Privacy}, or \emph{JPLT} for short. 
The JPLT problem under Model~I (or Model~II) is referred to as the \emph{JPLT-I} (or \emph{JPLT-II}) problem, and a protocol for JPLT-I (or JPLT-II) is referred to as a \emph{JPLT-I} (or \emph{JPLT II}) \emph{protocol}. 
A protocol is called \emph{linear} if the server's answer to the user's query consists only of linear combinations of the messages; otherwise, the protocol is called \emph{non-linear}.  

%Following the convention in the PIR and PLC literature, 
We measure the efficiency of a JPLT-I or JPLT-II protocol by its \emph{rate}---defined as the ratio of the entropy of the demand (i.e., $H(\mathbf{Z})=LB$) to the entropy of the answer (i.e., $H(\mathbf{A})$). 
We define the \emph{capacity} of JPLT-I or JPLT-II as the supremum of rates over all JPLT-I or JPLT-II protocols, respectively.
In this work, our goal is to characterize %derive (tight) lower and upper bounds on 
the capacity of these settings in terms of $K,D,L$. 
Note that the capacity may also depend on the field size $q$ in general. 
Notwithstanding, in this work we are interested in characterizing the supremum of rates over all protocols and all $q$.\footnote{Our converse bounds hold for any $q$, and our achievability schemes achieve these converse bounds when $q$ is sufficiently large, depending on $K,D,L$.}

\begin{figure*}[t!]
\centering
\begin{subfigure}[t]{.5\textwidth}
  \centering
  \includegraphics[width=.75\linewidth]{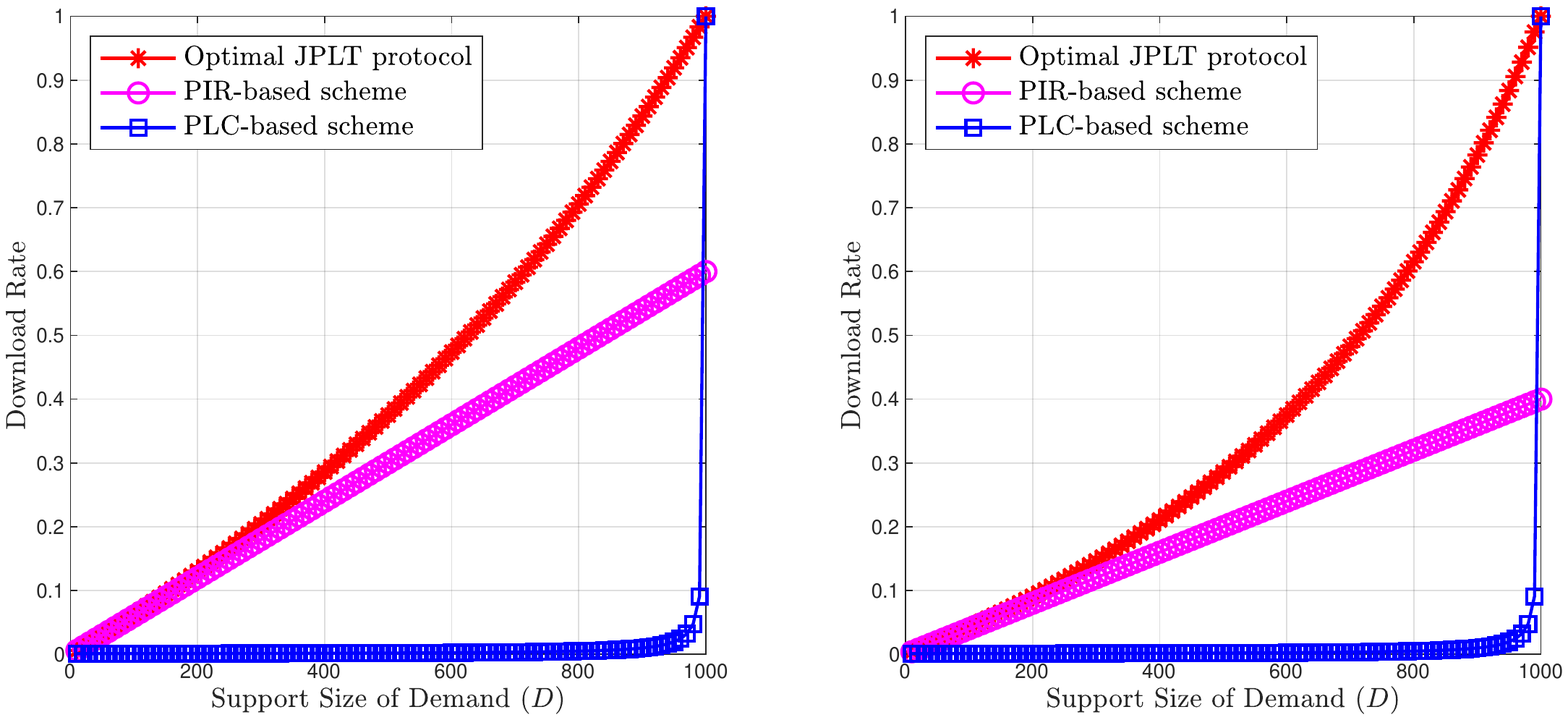}
  \caption{$K=1000$, $L/D=0.6$}
  \label{fig:JPLT1}
\end{subfigure}%
\begin{subfigure}[t]{.5\textwidth}
  \centering
  \includegraphics[width=.75\linewidth]{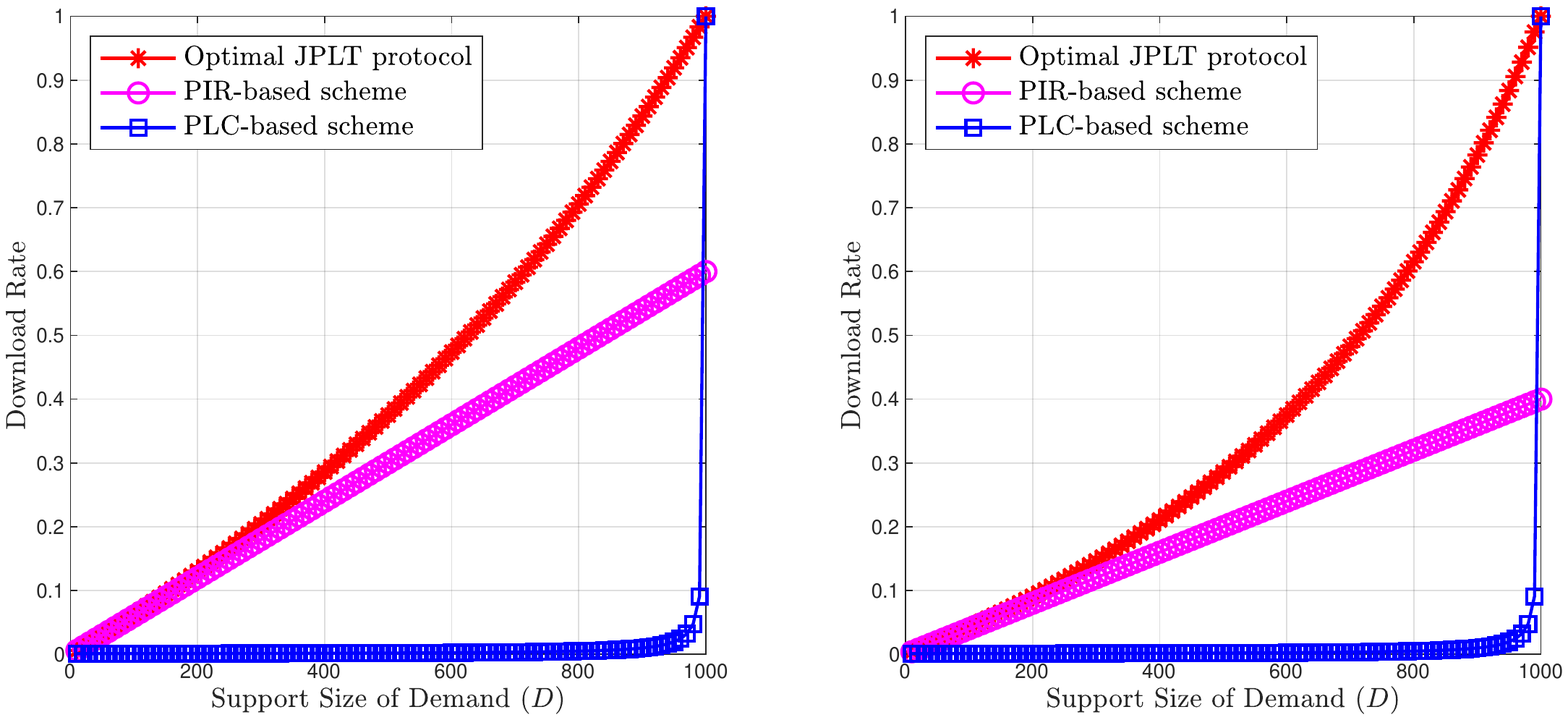}
  \caption{$K=1000$, $L/D=0.4$}
  \label{fig:JPLT2}
\end{subfigure}%\vspace{0.2cm}
\caption{The download rate of an optimal JPLT protocol and the PIR-based and PLC-based schemes.}
\label{fig:JPLT}
\end{figure*}

\section{Main Results}
In this section, we present our main results. % for the JPLT-I and JPLT-II settings. 
Theorems~\ref{thm:JPLT1} and~\ref{thm:JPLT2} characterize the capacity of JPLT-I and JPLT-II, respectively. 
The proofs are given in Sections~\ref{sec:JPLT1} and~\ref{sec:JPLT2}, respectively. 
%characterizes the capacity of the JPLT-II setting. 

\begin{theorem}\label{thm:JPLT1}
For the JPLT-I setting with $K$ messages, demand's support size $D$, and demand's dimension $L$, the capacity is given by $L/(K-D+L)$. 	
\end{theorem}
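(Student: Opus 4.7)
The plan is to establish matching converse and achievability bounds for the rate $L/(K-D+L)$.

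\emph{Converse via a sliding-window rank argument.} I would lower-bound $H(\mathbf{A})$ by $(K-D+L)B$, giving rate at most $L/(K-D+L)$. Fix any realization $\mathrm{Q}$ in the support of $\mathbf{Q}$. By joint privacy, every $\mathrm{W}\in\mathbbm{W}$ has positive conditional probability given $\mathbf{Q}=\mathrm{Q}$, so for each such $\mathrm{W}$ I can pick (as a function of $\mathrm{Q}$) an MDS matrix $\mathrm{V}_{\mathrm{W}}$ for which $(\mathrm{W},\mathrm{V}_{\mathrm{W}})$ is feasible; recoverability then makes $\mathrm{V}_{\mathrm{W}}\mathbf{X}_{\mathrm{W}}$ a deterministic function of $(\mathbf{A},\mathbf{Q})$. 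Next I would consider the sliding window $\mathrm{W}_j=\{j,\dots,j+D-1\}$ for $j=1,\dots,K-D+1$ and prove by induction that $\{\mathrm{V}_{\mathrm{W}_i}\mathbf{X}_{\mathrm{W}_i}\}_{i\leq j}$ contains at least $L+(j-1)$ linearly independent combinations of $\mathbf{X}_1,\dots,\mathbf{X}_K$. The inductive step exploits the MDS structure of $\mathrm{V}_{\mathrm{W}_{j+1}}$: its column for the newly introduced coordinate $j+D$ is non-zero, so some row of $\mathrm{V}_{\mathrm{W}_{j+1}}$ places a non-zero entry at coordinate $j+D$, where every previously accumulated row vanishes, yielding a fresh independent direction. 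After $K-D+1$ steps the rank is at least $K-D+L$. Since the messages are independent and uniform, these combinations have joint entropy $(K-D+L)B$ given $\mathbf{Q}$, hence $H(\mathbf{A})\geq H(\mathbf{A}|\mathbf{Q})\geq (K-D+L)B$.

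\emph{Achievability (Specialized MDS Code Protocol).} Given $(\mathrm{W},\mathrm{V})$, form the $L\times K$ matrix $\mathrm{U}$ with $\mathrm{V}$ on the columns indexed by $\mathrm{W}$ and zeros elsewhere. Over a sufficiently large field (e.g., $q\geq K$ via a Reed-Solomon construction), the $L$-dimensional MDS subspace spanned by the rows of $\mathrm{U}$ can be completed to the row space of some $(K-D+L)\times K$ MDS generator matrix. I would sample $\mathrm{G}$ uniformly from this candidate set $\mathcal{G}(\mathrm{W},\mathrm{V})$, send it as the query, and receive $\mathrm{G}\mathrm{X}$ (a download of $(K-D+L)B$ bits, matching the converse). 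Writing $\mathrm{U}=\mathrm{T}\mathrm{G}$ for some $L\times (K-D+L)$ matrix $\mathrm{T}$, the user then recovers $\mathrm{V}\mathbf{X}_{\mathrm{W}}=\mathrm{T}(\mathrm{G}\mathrm{X})$.

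\emph{Joint privacy via counting.} The argument rests on two symmetry facts: (i) $|\mathcal{G}(\mathrm{W},\mathrm{V})|$ depends only on $(K,D,L,q)$, via column-permutation and change-of-basis symmetries of the family of MDS codes; and (ii) for any MDS $g$ and any $\mathrm{W}\in\mathbbm{W}$, the shortened subcode $\mathcal{C}_g(\mathrm{W})$ is itself an $[D,L]$ MDS code, admitting exactly $|\mathrm{GL}_L(\mathbbmss{F}_q)|$ MDS generator matrices $\mathrm{V}$---a count independent of $\mathrm{W}$ and $g$. Summing over the uniformly-distributed $\mathbf{V}$ then gives $\Pr(\mathbf{G}=g\mid\mathbf{W}=\mathrm{W}_0)=\Pr(\mathbf{G}=g)$ for every $\mathrm{W}_0$ and every $g$, establishing joint privacy.

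\emph{Main obstacle.} The subtlest step is the joint-privacy verification for the achievability scheme: the counting argument must genuinely decouple $\mathbf{G}$ from $\mathbf{W}$, and one must also confirm that every $[D,L]$ MDS subspace supported on a $D$-subset of $[K]$ extends to an $[K,K-D+L]$ MDS code over the operating field (a non-trivial but provable fact for sufficiently large $q$). The converse, by contrast, becomes a relatively clean information-theoretic argument once the sliding-window chain of subsets is fixed, and its rank-increment structure is what suggests the design of the achievability scheme.
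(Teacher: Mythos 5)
Your converse is essentially the paper's proof: fix a query, use joint privacy together with recoverability to extract, for each sliding window $\mathrm{W}_i=\{i,\dots,i+D-1\}$, an MDS-coded demand that is decodable from $(\mathbf{A},\mathbf{Q})$ (this is the paper's Lemma~\ref{lem:NCJPLT1}), and then use the fact that an MDS matrix has no all-zero column so that each new window contributes one fresh independent combination, giving $H(\mathbf{A})\geq (K-D+L)B$. Your achievability protocol is also the same in substance: complete the demand's global coefficient matrix $\mathrm{U}$ to a generator matrix of a $[K,K-D+L]$ MDS code, download $\mathrm{G}\mathrm{X}$, and recover $\mathrm{V}\mathbf{X}_{\mathrm{W}}$ linearly; the paper realizes the completion in the dual domain, extending the parity-check matrix of the $[D,L]$ code generated by $\mathrm{V}$ to a $(D-L)\times K$ MDS matrix (with an explicit GRS instantiation for $q\geq K$), which is the same extension fact you invoke in primal form. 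Your premise (ii) --- that the shortened subcode on any $D$-subset is a unique $[D,L]$ MDS code whose ordered bases number $|\mathrm{GL}_L(\mathbbmss{F}_q)|$ --- is exactly the paper's Proposition~\ref{prop:MDSSymmetry} and is correct.

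The genuine gap is your premise (i), on which the whole privacy counting rests: that $|\mathcal{G}(\mathrm{W},\mathrm{V})|$ depends only on $(K,D,L,q)$. The symmetries you cite do not deliver this. Column permutations relate different supports $\mathrm{W}$, and left multiplication by an invertible matrix does not change the code generated by $\mathrm{V}$ at all, so neither relates two monomially inequivalent $[D,L]$ MDS codes sitting on the same support; and the number of MDS completions of a $k\times n$ MDS matrix is in general a property of the code, not only of its parameters (already for $k=3$ this is the arc-extension problem in $\mathrm{PG}(2,q)$, where arcs of the same size can admit different numbers of extending points). Writing out Bayes' rule for your uniform-sampling rule, $\Pr(\mathbf{Q}=g\mid\mathbf{W}=\mathrm{W}_0)$ is proportional to $1/|\mathcal{G}(\mathrm{W}_0,\mathrm{V}_0)|$, where $\mathrm{V}_0$ generates the unique shortened subcode of $g$ on $\mathrm{W}_0$; so joint privacy for a fixed $g$ forces the completion count to be identical across all $D$-subsets, which is precisely the unproven claim, and if completion counts differ between inequivalent subcodes your scheme would actually leak information about $\mathrm{W}$. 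The paper does not sample uniformly over all MDS completions: it constructs $\mathrm{G}$ from a dual extension of the parity-check matrix and argues privacy from Proposition~\ref{prop:MDSSymmetry}, namely that every $D$-subset of coordinates of the query code carries a unique $L$-dimensional MDS subspace, each equally likely under the uniform prior on $(\mathbf{W},\mathbf{V})$ to be the demand's subspace. To close your version you would either have to prove the completion-count invariance or restrict the sampling to a family with an explicit transitive symmetry (e.g., the GRS completions of the paper's Specialized GRS Code protocol).
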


The proof of converse is based on a mix of linear-algebraic and information-theoretic arguments. %relying on the joint privacy and recoverability conditions. 
A key ingredient of the proof is the result of Lemma~\ref{lem:NCJPLT1} which follows from the joint privacy and recoverability conditions for Model~I. %mostly on a condition for JPLT-I protocols---provided by Lemma~\ref{lem:NCJPLT1}. %(see Section~\ref{subsec:JPLT-Conv}). 
%(see Section~\ref{sec:JPLT-Conv}). 
The converse bound naturally serves as an upper bound on the rate of any JPLT-I protocol. 
We prove the achievability by designing a linear JPLT-I protocol, termed the \emph{Specialized MDS Code protocol}, that achieves the converse bound. %(see Section~\ref{subsec:JPLT-Ach}). 
%(see Section~\ref{sec:JPLT-Ach}). 
This protocol generalizes those in~\cite{HKGRS2018} and~\cite{HS2019PC} for single-server PIR and PLC with joint privacy (when the user has no prior side information about the content of the messages available at the server), and is based on the idea of extending the MDS code generated by the coefficient matrix of the demand. 
%In particular, for the cases in which the coefficient matrix of the demand generates a Generalized Reed-Solomon (GRS) code, we give an explicit construction of a GRS code that contains a specific collection of codewords---specified by the demand's support and coefficient matrix. 
%We refer to this protocol as the \emph{Specialized GRS Code protocol}.

\begin{comment}
\begin{figure*}[t!]
\centering
\includegraphics[width=\textwidth]{figures/JPLT_vs_PIR_PLC.pdf}
\caption{The download rate of the proposed scheme and the PIR-based and PLC-based schemes.}
\label{fig:JPLT}
\end{figure*}
\end{comment}

\begin{theorem}\label{thm:JPLT2}
For the JPLT-II setting with $K$ messages, demand's support size $D$, and demand's dimension $L$, the capacity is given by $L/(K-D+L)$. 	
\end{theorem}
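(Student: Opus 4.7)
The plan is to establish matching converse and achievability bounds of $L/(K-D+L)$.

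For the \emph{converse}, I would argue by contradiction. Suppose some JPLT-II protocol achieves $H(\mathbf{A}) < (K-D+L) B$, and fix any $\mathrm{q}$ in the support of $\mathbf{Q}$. By joint privacy, $\Pr(\mathbf{W} = \tilde{\mathrm{W}} | \mathbf{Q} = \mathrm{q}) = 1/C_{K,D} > 0$ for every $\tilde{\mathrm{W}} \in \mathbbm{W}$, so for each such $\tilde{\mathrm{W}}$ there exists a full-rank matrix $\tilde{\mathrm{V}}(\tilde{\mathrm{W}}, \mathrm{q}) \in \mathbbm{V}_{I\hspace{-0.04cm}I}$ whose pair $(\tilde{\mathrm{W}}, \tilde{\mathrm{V}}(\tilde{\mathrm{W}}, \mathrm{q}))$ produces $\mathrm{q}$ with positive conditional probability. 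Since $H(\mathbf{A}|\mathbf{Q},\mathbf{X})=0$, the recoverability condition then forces $\tilde{\mathrm{V}}(\tilde{\mathrm{W}}, \mathrm{q})\, \mathbf{X}_{\tilde{\mathrm{W}}}$ to be a deterministic function of $\mathbf{A}$ given $\mathrm{q}$, for every $\tilde{\mathrm{W}} \in \mathbbm{W}$.

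Define $\mathrm{U}(\tilde{\mathrm{W}}, \mathrm{q}) \subseteq \mathbbmss{F}_q^K$ to be the $L$-dimensional row span of the $L \times K$ embedding of $\tilde{\mathrm{V}}(\tilde{\mathrm{W}}, \mathrm{q})$ into columns $\tilde{\mathrm{W}}$ (with zeros elsewhere), and set $\mathrm{U}^*(\mathrm{q}) = \sum_{\tilde{\mathrm{W}} \in \mathbbm{W}} \mathrm{U}(\tilde{\mathrm{W}}, \mathrm{q})$. The above shows $\mathrm{U}^*(\mathrm{q})\, \mathbf{X}$ is deterministic given $(\mathbf{A}, \mathrm{q})$, so uniformity of $\mathbf{X}$ (independent of $\mathbf{Q}$) yields $H(\mathbf{A} | \mathbf{Q} = \mathrm{q}) \geq \dim \mathrm{U}^*(\mathrm{q}) \cdot B$. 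The key lemma to prove is purely linear-algebraic: for any family $\{\mathrm{U}(\tilde{\mathrm{W}})\}_{\tilde{\mathrm{W}} \in \mathbbm{W}}$ of $L$-dimensional subspaces of $\mathbbmss{F}_q^K$ with each $\mathrm{U}(\tilde{\mathrm{W}})$ supported on $\tilde{\mathrm{W}}$, one has $\dim \sum_{\tilde{\mathrm{W}}} \mathrm{U}(\tilde{\mathrm{W}}) \geq K-D+L$. The proof lets $T$ denote this dimension, picks a $T \times K$ generator matrix $\mathrm{M}$ of the sum, and notes that supportedness forces $\mathrm{rank}(\mathrm{M}_{\tilde{\mathrm{W}}^c}) \leq T - L$ for every $\tilde{\mathrm{W}}$. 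If $T < K-D+L$, select a set $\mathrm{J}$ of $T$ linearly independent columns of $\mathrm{M}$ and take $\tilde{\mathrm{W}}^c$ to be a $(K-D)$-subset that contains $\mathrm{J}$ (when $T \leq K-D$) or is contained in $\mathrm{J}$ (when $T > K-D$); in either case $\mathrm{rank}(\mathrm{M}_{\tilde{\mathrm{W}}^c}) = \min(T, K-D) > T - L$, contradicting the supportedness bound. Combining yields $H(\mathbf{A}) \geq H(\mathbf{A} | \mathbf{Q}) \geq (K-D+L) B$, contradicting the initial assumption.

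For \emph{achievability}, the plan is to design the linear \emph{Specialized Augmented Code protocol}, whose answer consists of $K-D+L$ coded symbols per message coordinate. Because $\mathrm{V}$ is only assumed to be full-rank (not MDS), the MDS-extension strategy used for JPLT-I does not apply directly; instead, the protocol will \emph{augment} the (non-MDS) code generated by $[\mathrm{V}]_{\mathrm{W}}$ by a randomly drawn $(K-D) \times K$ MDS code, forming a $(K-D+L) \times K$ generator matrix $\mathrm{G}$. The query $\mathrm{R}\, \mathrm{G}$, with $\mathrm{R}$ a uniformly random invertible $(K-D+L) \times (K-D+L)$ matrix, is sent to the server, and the user recovers $\mathrm{V}\, \mathrm{X}_{\mathrm{W}}$ by multiplying the answer by $\mathrm{R}^{-1}$ and reading off the first $L$ rows of $\mathrm{G}\, \mathrm{X}$.

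The main obstacle I expect is ensuring joint privacy, i.e., that the induced distribution of the query is identical across all $(\mathrm{W}, \mathrm{V})$ pairs. Because different choices of $\mathrm{V}$ produce distinct row spans for $[\mathrm{V}]_{\mathrm{W}}$, the distribution of the augmenting MDS component must be engineered so that the resulting $(K-D+L)$-dimensional row span of $\mathrm{G}$ has a law invariant under $(\mathrm{W}, \mathrm{V})$; verifying this invariance and confirming that $q$ can be chosen large enough for such an MDS augmentation to exist is the crucial technical step that distinguishes the JPLT-II construction from the MDS-extension used for JPLT-I.
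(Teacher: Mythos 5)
Your converse is correct and, at the top level, follows the same route as the paper: joint privacy plus recoverability give, for every $\tilde{\mathrm{W}}\in\mathbbm{W}$, a recoverable full-rank demand supported on $\tilde{\mathrm{W}}$ (this is exactly the paper's Lemma~\ref{lem:NCJPLT2}), and the answer entropy is then lower bounded by $B$ times the dimension of the joint span of all these coefficient spaces. Where you genuinely differ is in how the bound $\dim\geq K-D+L$ on that span is proved. The paper argues by contradiction through a Gauss--Jordan normal form $[\mathrm{I},\mathrm{P}]$, shows the block $\mathrm{P}$ is all-zero by sweeping over many supports of the form $[L]\cup\hat{\mathrm{W}}_i$ (and re-choices of the $L$-subset), and then reaches a contradiction with one further support. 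Your argument is shorter and cleaner: since an $L$-dimensional subspace supported on $\tilde{\mathrm{W}}$ lies in the kernel of the projection of the row space onto the coordinates $\tilde{\mathrm{W}}^{c}$, you get $\mathrm{rank}(\mathrm{M}_{\tilde{\mathrm{W}}^c})\leq T-L$ for every $\tilde{\mathrm{W}}$, and a single well-chosen $\tilde{\mathrm{W}}^c$ (containing, or contained in, a set of $T$ independent columns) forces $\min(T,K-D)\leq T-L$, hence $T\geq K-D+L$. This avoids the normal-form bookkeeping entirely and is a valid, arguably preferable, substitute for the paper's Lemma~\ref{lem:JPLT2-Conv} core; the contradiction framing at the start is superfluous since the argument directly yields $H(\mathbf{A})\geq (K-D+L)B$.

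The achievability part, however, has a genuine gap. You describe the same protocol as the paper (augment the global coefficient matrix $\mathrm{U}$ with a $(K-D)\times K$ MDS matrix $\mathrm{M}$ and scramble with a random invertible $\mathrm{R}$), but you do not prove joint privacy; you explicitly defer it as ``the crucial technical step,'' proposing to engineer the law of the augmenting MDS component so the query distribution is invariant in $(\mathrm{W},\mathrm{V})$. The missing idea is structural, not distributional: because $\mathrm{M}$ is MDS, its $(K-D)\times(K-D)$ submatrix on the columns $[K]\setminus\tilde{\mathrm{W}}$ is invertible for \emph{every} $\tilde{\mathrm{W}}\in\mathbbm{W}$, so the row space of $\hat{\mathrm{G}}=[\mathrm{U}^{\transpose},\mathrm{M}^{\transpose}]^{\transpose}$ projects onto those $K-D$ coordinates with full rank, and the kernel of that projection is exactly an $L$-dimensional subspace supported on $\tilde{\mathrm{W}}$ --- i.e., the query's row space contains a \emph{unique} $L$-dimensional subspace on every $D$-subset of coordinates. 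This existence-and-uniqueness property (the content of the paper's Lemma~\ref{lem:JPLT2-Ach}) is what makes every $\tilde{\mathrm{W}}$ equally plausible given the query, with $\mathrm{M}$ taken as an arbitrary fixed MDS matrix generated independently of $(\mathrm{W},\mathrm{V})$; no engineering of its distribution is needed. Your field-size worry is also misplaced for Model~II: since one only needs \emph{some} $(K-D)\times K$ MDS matrix (e.g., a GRS generator) rather than an MDS extension of a prescribed code, $q\geq K$ suffices, in contrast with the JPLT-I construction.
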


We prove the converse for the JPLT-II problem %by using similar arguments as in our converse proof for the JPLT-I problem, 
by relying on 
%except that in this case we rely on 
the result of Lemma~\ref{lem:NCJPLT2} which follows from the joint privacy and recoverability conditions for Model~II. %a necessary condition for JPLT-II protocols---stated in Lemma~\ref{lem:NCJPLT2}. 
The proof is by the way of contradiction, and is also applicable to the JPLT-I problem.  
That said, for the JPLT-I problem we present a different converse proof based on construction, which also gives insight into the design of an achievability scheme.  
Note that our constructive proof technique does not extend to the JPLT-II problem. 
This is because the construction we propose in the proof relies on the fact that MDS matrices do not contain any all-zero columns. 
This condition, however, does not always hold for full (row-) rank matrices. %, this is while an $L\times D$ full-rank matrix can contain up to $D-L$ all-zero columns. 

%It should be noted that our proof technique for the case of MDS matrices does not extend to the case of full-rank matrices. 
%This is because in the former case there cannot be any all-zero column, whereas in the latter case, up to $D-L$ columns can be all-zero. 

%The converse proof for the case of full-rank matrices is applicable to the case of MDS matrices. 

%That said, we present a constructive proof for the JPLT-I problem that gives insight into the design of an achievability scheme. 

%This is in contrast to our proof by contradiction for the JPLT-II problem. 

To prove the achievability result, we propose a linear JPLT-II protocol, termed the \emph{Specialized Augmented Code protocol}, that achieves the converse bound. 
This protocol is based on the idea of augmenting the global coefficient matrix of the demand by an MDS code. 
The main difference between our achievability schemes for JPLT-I and JPLT-II is that unlike the Specialized MDS Code protocol, the Specialized Augmented Code protocol does not necessarily generate an MDS code.

\begin{remark}\label{rem:JPLT1}
\emph{In~\cite{HS2019PC}, it was shown that the rate ${1/(K-D+1)}$ is achievable for single-server PLC with joint privacy when the user has no prior side information about the messages available at the server. 
The optimality of this rate, however, was not shown. 
The results of Theorems~\ref{thm:JPLT1} and~\ref{thm:JPLT2} for ${L=1}$ prove the optimality of this rate.   
For $L=D$, the JPLT-I and JPLT-II problems are equivalent to the problem of single-server PIR without any prior side information when joint privacy is required. As was shown in~\cite{HKGRS2018}, an optimal solution for this problem is to download the entire dataset. 
This is consistent with the results of Theorems~\ref{thm:JPLT1} and~\ref{thm:JPLT2} for ${L=D}$.}	
\end{remark}

\begin{remark}\label{rem:JPLT2}
\emph{The results of Theorems~\ref{thm:JPLT1} and~\ref{thm:JPLT2} show that %, when there is only a single server and there is no prior side information available at the user, 
JPLT-I and JPLT-II, collectively referred to as JPLT, can be performed more efficiently than using either of the following PIR-based and PLC-based approaches: 
(i) retrieving the messages required for the user's computation using a single-server multi-message PIR scheme that achieves joint privacy~\cite{HKGRS2018}, and then computing the required linear combinations locally, or 
(ii) computing each of the required linear combinations separately via applying a single-server PLC scheme that achieves joint privacy~\cite{HS2019PC}. 
Note that the optimal rate for the PIR-based or PLC-based scheme is $L/K$ or $1/(K-D+1)$, respectively, whereas an optimal JPLT protocol achieves the rate ${L/(K-D+L)}$. 
Fig.~\ref{fig:JPLT} depicts the download rate of an optimal JPLT protocol, the PIR-based scheme, and the PLC-based scheme, for different values of $D\in \{10,20,\dots,1000\}$, where $K=1000$, and $L/D=0.6$ (left plot) or $L/D=0.4$ (right plot). %As can be seen, the proposed scheme outperforms the other two schemes. 
As can be seen in Fig.~\ref{fig:JPLT}, for a fixed ratio $L/D$, %---which corresponds to a fixed dimensionality reduction factor , 
the advantage of an optimal JPLT protocol over the PIR-based scheme is more pronounced as $D$ increases. 
For instance, for $L/D=0.4$, the rate of an optimal JPLT protocol is about $15\%$ and $30\%$ more than that of the PIR-based scheme for $D=250$ and $D=500$, respectively.
%Comparing an optimal JPLT protocol and the PLC-based scheme, 
It can also be seen in Fig.~\ref{fig:JPLT} that when the ratio $L/D$ is fixed, the gap between the rate of an optimal JPLT protocol and the rate of the PLC-based scheme increases as $D$ increases up to a threshold very close to $K$; 
and beyond this threshold, the gap decreases rapidly as $D$ increases up to $K$. 
In addition, a comparison of the left and right plots in Fig.~\ref{fig:JPLT} shows that for a fixed value of $D$, the smaller is the ratio $L/D$, %(i.e., the larger is the dimensionality reduction factor), 
the more is the advantage of an optimal JPLT protocol over the best of the other two schemes. 
For instance, for $D=250$, the rate of an optimal JPLT protocol is about $10\%$ and $15\%$ more than that of the PIR-based scheme for $L/D=0.6$ and $L/D=0.4$, respectively.}
\end{remark}

\section{Linear JPLT Protocols and Linear Codes}\label{sec:LINEAR}
%The individual privacy and recoverability conditions impose a necessary but not sufficient condition for any IPLT protocol. %, stated in Lemma~\ref{lem:NCIPLT}. 

While any linear or non-linear JPLT protocol must satisfy the joint privacy and recoverability conditions, for linear JPLT protocols these conditions can be translated into the language of linear codes as discussed below. 
%particularly have an interesting coding-theoretic interpretation for linear JPLT protocols.  

%While the individual privacy and recoverability conditions must hold for any linear or non-linear JPLT protocol, they establish an interesting connection between linear JPLT protocols and linear codes. 
%Below, we discuss this connection for both deterministic and randomized protocols. 
  
%yields a $[D,L]$ code whose generator matrix is $\mathrm{V}_i$.
%For a coordinate $i$, we say that a collection $\mathrm{W}_i$ of $K-D$ coordinates is \emph{valid} if $\mathrm{W}_i$ does not contain $i$, and a collection $\mathrm{C}_i$ of $L$ codewords is \emph{valid} if their support does not contain $\mathrm{W}_i$. 
%For valid $\mathrm{W}_i$ and $\mathrm{C}_i$, we say that the pair $(\mathrm{W}_i,\mathrm{C}_i)$ is \emph{feasible} if puncturing the subcode generated by $\mathrm{C}_i$ at $\mathrm{W}_i$ yields a $[D,L]$ MDS code.\footnote{To puncture a linear code at a coordinate, the column corresponding to that coordinate is deleted from the generator matrix of the code.} 

%$\mathrm{W}$ and $\mathrm{V}$ are the demand's support and coefficient matrix, respectively.

%Consider a deterministic linear IPLT protocol. 
In the following, we refer to a JPLT-I or JPLT-II protocol, simply as a JPLT protocol, and denote both $\mathbbmss{V}_{I}$ for Model~I and $\mathbbmss{V}_{I\hspace{-0.04cm}I}$ for Model~II by $\mathbbmss{V}$ for the ease of notation.

Let $w\triangleq |\mathbbmss{W}|$ and $v\triangleq |\mathbbmss{V}|$, and let $\{\mathrm{W}_k\}_{k\in [w]}$ and $\{\mathrm{V}_l\}_{l\in [v]}$ be an arbitrary ordering of all elements in $\mathbbmss{W}$ and $\mathbbmss{V}$, respectively.
%Note that $|\mathbbmss{W}\times \mathbbmss{V}|=wv$. 
Consider an arbitrary linear JPLT protocol. 
For any instance $(\mathrm{W}_k,\mathrm{V}_l)$ for $k\in [w]$ and $l\in [v]$, the protocol can be specified by an ensemble of $n$ ($=n(k,l)$) distinct linear codes $\mathscr{C}^{1}_{k,l},\dots,\mathscr{C}^{n}_{k,l}$ of length $K$, for some integer $n$, and their respective probabilities $p^{1}_{k,l},\dots,p^{n}_{k,l}>0$. 
More specifically, for each $h\in [n]$, $\mathscr{C}^{h}_{k,l}$ is chosen with probability $p^{h}_{k,l}$ as the corresponding code for the instance $(\mathrm{W}_k,\mathrm{V}_l)$, i.e., the code corresponding to the coefficient matrix of the linear combinations that constitute the answer $\mathrm{A}^{[\mathrm{W}_k,\mathrm{V}_l]}$ to the query $\mathrm{Q}^{[\mathrm{W}_k,\mathrm{V}_l]}$.
%That is, with probability $p^{h}_{k,l}$, $\mathscr{C}^{h}_{k,l}$ is the code corresponding to the coefficient matrix of the linear combinations that constitute the answer $\mathrm{A}^{[\mathrm{W}_k,\mathrm{V}_l]}$ to the query $\mathrm{Q}^{[\mathrm{W}_k,\mathrm{V}_l]}$. 
Note that ${\sum_{h=1}^{n}p^{h}_{k,l} = 1}$. 
%Let $\{\mathscr{C}_1,\dots,\mathscr{C}_m\}$ be the underlying set of the multiset $\{\mathscr{C}^{n}_{k,l}\}_{k,l,n}$, formed by its distinct elements. 

%For any $(k,l)\in [w]\times[v]$, we denote by $\mathscr{C}_{k,l}$ the corresponding linear code for the instance $(\mathrm{W}_k,\mathrm{V}_l)$. 
%That is, $\mathscr{C}_{k,l}$ is the code corresponding to the coefficient matrix of the linear combinations that constitute the answer $\mathrm{A}^{[\mathrm{W}_k,\mathrm{V}_l]}$ to the query $\mathrm{Q}^{[\mathrm{W}_k,\mathrm{V}_l]}$. 
%Note that $\mathscr{C}_{k,l}$'s are not necessarily distinct, and $\{\mathscr{C}_{k,l}\}_{k,l}$ is a multiset in general. 
%Let $m$ be the number of distinct elements in the multiset $\{\mathscr{C}_{k,l}\}_{k,l}$, denoted by $\mathscr{C}_1,\dots,\mathscr{C}_m$. 
%Let $m$ be the number of distinct elements in the multiset $\{\mathscr{C}_{k,l}\}_{k,l}$, and let $\mathscr{C}_1,\dots,\mathscr{C}_m$ and $r_1,\dots,r_m$ be the %underlying set of the multiset $\{\mathscr{C}_{k,l}\}_{k,l}$, formed by its distinct elements.
%distinct elements and their multiplicities in the multiset $\{\mathscr{C}_{k,l}\}_{k,l}$, respetively.  

%for any given $\mathrm{W},\mathrm{V}$, and its corresponding linear code $\mathscr{C}$ generated by the answer's  coefficient matrix (i.e., the coefficient matrix of the linear combinations that constitute the server's answer).

Below, we introduce the notion of $(k,l)$-feasibility, which we will use to restate the joint privacy and recoverability conditions in the terminology of linear codes. 
% is a necessary and sufficient condition for recoverability, for the instance $(\mathrm{W}_k,\mathrm{V}_l)$.
For any $k,l$, we say that a linear code $\mathscr{C}$ of length $K$ is \emph{$(k,l)$-feasible} %for the coordinate $i$ 
%Given a linear code $\mathscr{C}$ of length $K$ and a coordinate ${i\in [K]}$, a pair ${(\tilde{\mathrm{W}},\tilde{\mathrm{V}})\in \mathbbmss{W}\times \mathbbmss{V}}$ is said to be \emph{$\mathscr{C}$-feasible} for the coordinate $i$ %with respect to $\mathscr{C}$ 
%if ${\mathrm{W}_{k}}$ contains $i$, and
if $\mathscr{C}$ contains a collection $\mathrm{C}$ of $L$ codewords whose support is a subset of $\mathrm{W}_{k}$, and the %subcode of $\mathscr{C}$
code generated by $\mathrm{C}$, when punctured at the coordinates indexed by $\mathrm{W}_{k}$, is identical to the code generated by $\mathrm{V}_{l}$.\footnote{Puncturing a linear code at a coordinate is performed by deleting the column pertaining to that coordinate from the generator matrix of the code.} 
Note that, for satisfying the recoverability condition, it is necessary and sufficient that for any $k,l,h$, the code $\mathscr{C}^{h}_{k,l}$ is $(k,l)$-feasible.

Note that $\{\mathscr{C}^{h}_{k,l}\}_{k,l,h}$ is a multiset in general because $\mathscr{C}^{h}_{k,l}$'s are not necessarily distinct. 
Let $m$ be the number of distinct elements in $\{\mathscr{C}^{h}_{k,l}\}_{k,l,h}$, and let $\mathscr{C}_1,\dots,\mathscr{C}_m$ be the distinct elements in $\{\mathscr{C}^{h}_{k,l}\}_{k,l,h}$.
For any $k\in [w]$ and $j\in [m]$, let $q_{k,j}$ be the sum of probabilities $p^{h}_{k,l}$ over all $l,h$ such that $\mathscr{C}^{h}_{k,l}$ is $(k,l)$-feasible, and $\mathscr{C}^{h}_{k,l}$ and $\mathscr{C}_{j}$ are identical. 
For any $j\in [m]$, let $r_j$ be the sum of probabilities $p^{h}_{k,l}$ over all $k,l,h$ such that $\mathscr{C}^{h}_{k,l}$ and $\mathscr{C}_{j}$ are identical.
Note that $q_{k,j}/r_j$ is the conditional probability that the message index set $\mathrm{W}_k$ is the demand's support, given that $\mathscr{C}_j$ is the code corresponding to the answer. 
% let $\{\mathscr{C}_1,\dots,\mathscr{C}_m\}$ be those codes $\mathscr{C}_{k,l}$ that are distinct.  
%By Lemma~\ref{lem:NCIPLT}, $n_{i,j}\geq 1$ for all $i,j$.
%For each coordinate $i$ and each code index $j$, suppose there exist $n_{i,j}$ $\mathscr{C}_j$-feasible pairs. 
It should be obvious that $q_{k,j}> 0$ for all $k,j$ is a necessary condition for joint privacy. 
Note that this condition is only necessary, and not sufficient. 
A necessary and sufficient condition for joint privacy is that for any $j\in [m]$, $q_{k,j}=q_j$ for all $k\in [w]$, for some $q_j>0$.
%This is obviously a stronger condition than the necessary (but not sufficient) condition mentioned earlier. 
%However, this necessary condition is less combinatorial, and has proven more instrumental in the proofs. 

%the necessary (but not sufficient) condition mentioned earlier is less combinatorial than this necessary and sufficient condition

%In contrast to this result which is more information theoretic, the necessary and sufficient condition for joint privacy in Section~\ref{sec:LINEAR} for linear protocols is more combinatorial.
%Note that this is different from $p_{i,j} = p$ for all $i$ and all $j$, for some $p>0$. 
%That is, the sum of probabilities for all coordinates do not need to be the same for a given code in the ensemble; instead, they must be the same over all codes in the ensemble. 
%Also, a necessary and sufficient condition for recoverability is that for any $(k,l,h)$,  $\mathscr{C}^{h}_{k,l}$ is $(k,l)$-feasible. % for all coordinates $i\in \mathrm{W}$, where $(\mathrm{W}_{k^{*}},\mathrm{V}_{l^{*}}) = (\mathrm{W},\mathrm{V})$.

For any $k,l$, let $d_{k,l}$ be the expected value of the dimension of a randomly chosen code from the ensemble $\{\mathscr{C}^{1}_{k,l},\dots,\mathscr{C}^{n}_{k,l}\}$ for the instance $(\mathrm{W}_k,\mathrm{V}_l)$, according to the probability distribution $\{p^{1}_{k,l},\dots,p^{n}_{k,l}\}$. 
%That is, $d_{k,l}$ is the weighted average of the dimension of the codes $\mathscr{C}^{1}_{k,l},\dots,\mathscr{C}^{n}_{k,l}$, where the weights are specified by the probabilities $p^{1}_{k,l},\dots,p^{n}_{k,l}$. 
Let $d_{\text{ave}}$ be the average of $d_{k,l}$'s over all $k,l$. 
%The rate of a randomized linear IPLT protocol is equal to $1/d_{\text{ave}}$. 
It should be obvious that the rate of a linear JPLT protocol is equal to $1/d_{\text{ave}}$. 
Maximizing the rate of a linear JPLT protocol is then equivalent to minimizing $d_{\text{ave}}$, subject to the aforementioned necessary and sufficient conditions for joint privacy and recoverability. 

%Notwithstanding that for both deterministic and randomized linear protocols these necessary and sufficient conditions are stronger than the result of Lemma~\ref{lem:NCIPLT}, the latter is more information-theoretic and less combinatorial than the former, and has proven more useful in the converse proof. 
%%%%%%%%%%%%%%%%%%%%%%%%%%%%

%The rate of a linear JPLT protocol is then given by $1/d_{\text{ave}}$, where $d_{\text{ave}}$ denotes the average of $d_{k,l}$'s over all pairs $(k,l)$, and $d_{k,l}$ is the expected value of the dimensions of the codes $\mathscr{C}^{1}_{k,l},\dots,\mathscr{C}^{n}_{k,l}$ according to the probability distribution specified by $p^{1}_{k,l},\dots,p^{n}_{k,l}$. The corresponding capacity of a JPLT protocol is then equivalent to minimizing $1/d_{\text{ave}}$, subject to the necessary and sufficient conditions described above, noting that capacity is the maximum rate over all JPLT protocols.    

%%%%%%%%%%%%%%%%%%%%%%%%%%%%

\section{Proof of Theorem~\ref{thm:JPLT1}}\label{sec:JPLT1}
We prove the converse in Section~\ref{subsec:JPLT1-Conv}, and present the achievability scheme in Section~\ref{subsec:JPLT1-Ach}.

\subsection{Converse Proof}\label{subsec:JPLT1-Conv}
The following result is useful in the proof of converse for the JPLT-I problem. 
%The following lemma states a necessary %(yet not always sufficient) 
%condition for any JPLT-I protocol. %This result follows immediately from the joint privacy and recoverability conditions, and its proof is omitted for brevity. 

\begin{lemma}\label{lem:NCJPLT1}
Given any JPLT-I protocol, for any $\tilde{\mathrm{W}}\in\mathbbm{W}$, there must exist $\tilde{\mathrm{V}}\in\mathbbm{V}_{I}$ such that \[H(\mathbf{Z}^{[\tilde{\mathrm{W}},\tilde{\mathrm{V}}]}| \mathbf{A}, \mathbf{Q})= 0.\] 		
\end{lemma}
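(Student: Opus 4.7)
The plan is to combine the joint privacy and recoverability conditions. Fix any $\tilde{\mathrm{W}}\in\mathbbm{W}$. My aim is to exhibit $\tilde{\mathrm{V}}\in\mathbbm{V}_{I}$ such that the query–answer pair $(\mathbf{A},\mathbf{Q})$ determines $\tilde{\mathrm{V}}\mathbf{X}_{\tilde{\mathrm{W}}}$. The idea is straightforward: joint privacy says that the query is ``explainable'' by the candidate support $\tilde{\mathrm{W}}$ paired with \emph{some} MDS coefficient matrix, and recoverability then forces the corresponding demand to be recoverable from $(\mathbf{A},\mathbf{Q})$.

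First, for every realization $\mathrm{Q}$ of $\mathbf{Q}$ with $\Pr(\mathbf{Q}=\mathrm{Q})>0$, the joint privacy condition gives $\Pr(\mathbf{W}=\tilde{\mathrm{W}}\mid\mathbf{Q}=\mathrm{Q})=1/C_{K,D}>0$. Consequently, the set
\[
\mathcal{V}(\mathrm{Q})\triangleq\{\mathrm{V}\in\mathbbm{V}_{I}:\Pr(\mathbf{W}=\tilde{\mathrm{W}},\mathbf{V}=\mathrm{V},\mathbf{Q}=\mathrm{Q})>0\}
\]
must be non-empty: there is at least one $\tilde{\mathrm{V}}\in\mathcal{V}(\mathrm{Q})$ for which the user's query-generation procedure, on input $(\tilde{\mathrm{W}},\tilde{\mathrm{V}})$ together with an appropriate realization of the private random key $\mathbf{R}$ (if any), outputs exactly $\mathrm{Q}$.

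Second, I would invoke recoverability. For any such $\tilde{\mathrm{V}}\in\mathcal{V}(\mathrm{Q})$, the condition $H(\mathbf{Z}\mid\mathbf{A},\mathbf{Q},\mathbf{W},\mathbf{V})=0$ implies that on the positive-probability event $\{\mathbf{W}=\tilde{\mathrm{W}},\mathbf{V}=\tilde{\mathrm{V}},\mathbf{Q}=\mathrm{Q}\}$, the quantity $\tilde{\mathrm{V}}\mathbf{X}_{\tilde{\mathrm{W}}}$ is a deterministic function of $(\mathbf{A},\mathbf{Q})$. Since $\mathbf{A}$ is itself a deterministic function of $(\mathbf{Q},\mathbf{X})$ and $\mathbf{X}$ is uniformly distributed on $\mathbbm{F}_{q}^{KN}$ and independent of $(\mathbf{W},\mathbf{V},\mathbf{R})$, every realization of $\mathbf{X}$ has positive joint probability on this event. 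Therefore the reconstruction map extends to the full sample space, and $\tilde{\mathrm{V}}\mathbf{X}_{\tilde{\mathrm{W}}}=\mathbf{Z}^{[\tilde{\mathrm{W}},\tilde{\mathrm{V}}]}$ is a function of $(\mathbf{A},\mathbf{Q})$ unconditionally, i.e., $H(\mathbf{Z}^{[\tilde{\mathrm{W}},\tilde{\mathrm{V}}]}\mid\mathbf{A},\mathbf{Q})=0$.

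The main obstacle I anticipate is the final step: the lemma literally asserts the existence of a single $\tilde{\mathrm{V}}\in\mathbbm{V}_{I}$, whereas the construction above naturally produces a $\tilde{\mathrm{V}}$ that may depend on the realization $\mathrm{Q}$. To handle this, I would view $\tilde{\mathrm{V}}$ as a measurable selector $\mathrm{Q}\mapsto \tilde{\mathrm{V}}(\mathrm{Q})\in\mathcal{V}(\mathrm{Q})$, which gives a well-defined random variable on $\mathbbm{V}_{I}$; the entropy identity $H(\tilde{\mathrm{V}}(\mathbf{Q})\mathbf{X}_{\tilde{\mathrm{W}}}\mid\mathbf{A},\mathbf{Q})=0$ then holds by the pointwise argument above. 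Under this (natural) interpretation, $\tilde{\mathrm{V}}$ is an element of $\mathbbm{V}_{I}$ on every sample point, and the lemma follows.
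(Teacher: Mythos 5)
Your argument is in substance the same as the paper's: the paper proves the contrapositive---if no $\tilde{\mathrm{V}}\in\mathbbm{V}_{I}$ makes $\mathbf{Z}^{[\tilde{\mathrm{W}},\tilde{\mathrm{V}}]}$ recoverable from $(\mathbf{A},\mathbf{Q})$, then $\Pr(\mathbf{W}=\tilde{\mathrm{W}}\mid\mathbf{Q}=\mathrm{Q})=0$, contradicting joint privacy---which is precisely your forward-direction combination of joint privacy (yielding a positive-probability triple $(\tilde{\mathrm{W}},\tilde{\mathrm{V}},\mathrm{Q})$) and recoverability (yielding determinism of the corresponding demand), so this is not a different route. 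The obstacle you flag, that $\tilde{\mathrm{V}}$ may depend on the realization $\mathrm{Q}$, is genuine, but the paper's own proof is no stronger on this point: it also argues for a fixed query realization, and what the converse of Theorem~\ref{thm:JPLT1} really uses is the per-realization statement $H(\mathbf{Z}^{[\tilde{\mathrm{W}},\tilde{\mathrm{V}}(\mathrm{Q})]}\mid\mathbf{A},\mathbf{Q}=\mathrm{Q})=0$ (one can condition on $\mathbf{Q}$ throughout that proof, and for deterministic queries a single $\tilde{\mathrm{V}}$ suffices outright). So your selector reading coincides with the paper's implicit reading, and your write-up is, if anything, more explicit about the conditioning than the original.
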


\begin{proof}
The proof is by the way of contradiction. Consider an arbitrary JPLT-I protocol. 
Let $\mathrm{Q}$ and $\mathrm{A}$ be the query and the corresponding answer generated by this protocol for an arbitrary instance $(\mathrm{W},\mathrm{V})$. 
Consider an arbitrary ${\tilde{\mathrm{W}}\in\mathbbm{W}}$. 
Suppose that there does not exist $\tilde{\mathrm{V}}\in\mathbbm{V}_{I}$ such that ${H(\mathbf{Z}^{[\tilde{\mathrm{W}},\tilde{\mathrm{V}}]}| \mathbf{A}, \mathbf{Q})= 0}$. 
This implies that $\mathbf{W}\neq \tilde{\mathrm{W}}$, given that $\mathbf{Q}=\mathrm{Q}$ %, i.e., $\tilde{\mathrm{W}}$ cannot be the support of the demand 
(otherwise, if $\mathbf{W}=\tilde{\mathrm{W}}$, the
%then the user could not retrieve the demand (i.e., 
recoverability condition is not satisfied). %Hence, from the perspective of the server, $\mathrm{W}^{*}$ could not be a potential support index set of the demand.
Thus, ${\Pr(\mathbf{W}=\tilde{\mathrm{W}}|\mathbf{Q}=\mathrm{Q}) = 0}$. 
This is, however, a contradiction because by the joint privacy condition,  ${\Pr(\mathbf{W}=\tilde{\mathrm{W}}|\mathbf{Q}=\mathrm{Q})}={\Pr(\mathbf{W}=\tilde{\mathrm{W}})}=1/C_{K,D}\neq 0$. 
%given the query $\mathrm{Q}$, every $D$-subset of message indices, including $\mathrm{W}^{*}$, must be equally likely to be the demand's support index set.            
\end{proof}

When considering linear protocols, the result of Lemma~\ref{lem:NCJPLT1} is equivalent to the necessary (but not sufficient) condition for joint privacy in Section~\ref{sec:LINEAR}.  
In contrast to this result which is more information theoretic and more instrumental in the proofs, the necessary and sufficient condition for joint privacy in Section~\ref{sec:LINEAR} is more combinatorial and harder to analyze.
Moreover, the necessary and sufficient condition for joint privacy in Section~\ref{sec:LINEAR} is specific to linear protocols; whereas 
Lemma~\ref{lem:NCJPLT1} applies also to non-linear protocols.

\begin{lemma}\label{lem:JPLT1-Conv}
The rate of any JPLT-I protocol for $K$ messages, demand's support size $D$, and demand's dimension $L$ is upper bounded by $L/(K-D+L)$.
\end{lemma}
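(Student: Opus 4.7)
The aim is to lower bound $H(\mathbf{A})$ by $(K-D+L)B$, which yields the rate bound since $H(\mathbf{Z})=LB$. The plan is to use Lemma~\ref{lem:NCJPLT1} to extract many ``ghost'' linear combinations of the messages, each determined by $(\mathbf{A},\mathbf{Q})$, supported on carefully chosen $D$-subsets; stack their global coefficient matrices into a single matrix $\mathrm{M}$; and show $\mathrm{rank}(\mathrm{M})\geq K-D+L$.

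Concretely, I would consider the $K-D+1$ subsets $\tilde{\mathrm{W}}_0\triangleq[D]$ and $\tilde{\mathrm{W}}_j\triangleq[D-1]\cup\{D+j\}$ for $j\in[K-D]$, which share the common ``hub'' $[D-1]$ and differ only in their last element. By Lemma~\ref{lem:NCJPLT1}, for each $j\in\{0,1,\ldots,K-D\}$ there exists $\tilde{\mathrm{V}}_j\in\mathbbm{V}_{I}$ such that $\mathrm{U}_j\mathbf{X}=\mathbf{Z}^{[\tilde{\mathrm{W}}_j,\tilde{\mathrm{V}}_j]}$ is a deterministic function of $(\mathbf{A},\mathbf{Q})$, where $\mathrm{U}_j$ is the $L\times K$ matrix obtained by placing $\tilde{\mathrm{V}}_j$ into the columns indexed by $\tilde{\mathrm{W}}_j$ and zeros elsewhere. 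Let $\mathrm{M}$ denote the vertical stack of $\mathrm{U}_0,\ldots,\mathrm{U}_{K-D}$; then $\mathrm{M}\mathbf{X}$ is a deterministic function of $(\mathbf{A},\mathbf{Q})$.

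The heart of the argument, and the main obstacle, will be showing $\mathrm{rank}(\mathrm{M})\geq K-D+L$ for any choice of the (not user-selected) matrices $\tilde{\mathrm{V}}_j\in\mathbbm{V}_{I}$ handed to us by the lemma. I will do this by bounding the nullity of $\mathrm{M}$ by $D-L$: the MDS property of each $\tilde{\mathrm{V}}_j$ guarantees no zero column, so the projection of its $(D-L)$-dimensional null space onto the coordinates indexed by the shared hub $[D-1]$ is injective. Writing $\mathcal{N}_j\subseteq\mathbbmss{F}_q^{D-1}$ for this $(D-L)$-dimensional image, any $\mathrm{y}\in\mathbbmss{F}_q^K$ with $\mathrm{M}\mathrm{y}=0$ must have its hub coordinates $(y_1,\ldots,y_{D-1})$ in $\bigcap_{j=0}^{K-D}\mathcal{N}_j$, while the remaining coordinate inside each $\tilde{\mathrm{W}}_j$ (namely $y_D$ for $j=0$ and $y_{D+j}$ for $j\geq 1$) is then pinned down by the corresponding constraint. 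Thus the nullity equals $\dim\bigl(\bigcap_j\mathcal{N}_j\bigr)\leq\dim\mathcal{N}_0=D-L$, giving $\mathrm{rank}(\mathrm{M})\geq K-D+L$.

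Having established the rank bound, the conclusion is routine: since $\mathbf{X}$ is uniform on $\mathbbmss{F}_q^{KN}$ and independent of $\mathbf{Q}$, one has $H(\mathbf{A}\mid\mathbf{Q})\geq H(\mathrm{M}\mathbf{X}\mid\mathbf{Q})=H(\mathrm{M}\mathbf{X})=\mathrm{rank}(\mathrm{M})\cdot B\geq(K-D+L)B$; hence $H(\mathbf{A})\geq H(\mathbf{A}\mid\mathbf{Q})\geq(K-D+L)B$ and the rate bound $LB/H(\mathbf{A})\leq L/(K-D+L)$ follows.
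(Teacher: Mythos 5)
Your proposal is correct, but the core counting step is done differently from the paper. The paper also invokes Lemma~\ref{lem:NCJPLT1} on $K-D+1$ carefully chosen $D$-subsets and also relies on the fact that an MDS matrix has no all-zero column, but it uses the \emph{sliding windows} $\mathrm{W}_i=\{i,\dots,i+D-1\}$ and a purely entropic argument: by the chain rule, $H(\mathbf{Z}_1)=LB$, and each subsequent window introduces the fresh message $\mathbf{X}_{i+D-1}$, untouched by all earlier windows, on which some row of $\mathbf{Z}_i$ has a nonzero coefficient, so each conditional term contributes at least $B$, giving $H(\mathbf{Z}_1,\dots,\mathbf{Z}_{K-D+1})\geq LB+(K-D)B$ as in~\eqref{eq:4}--\eqref{eq:5}. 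You instead take hub-and-spoke subsets sharing $[D-1]$, stack the global coefficient matrices into $\mathrm{M}$, and bound the nullity by $D-L$ via the injectivity of the hub projection of each $\ker(\tilde{\mathrm{V}}_j)$ (which is exactly where the no-zero-column property of MDS matrices enters), before converting rank to entropy; this is closer in spirit to the paper's JPLT-II converse (Lemma~\ref{lem:JPLT2-Conv}, cf.~\eqref{eq:7}--\eqref{eq:8}), except that your rank bound is direct and uses only $K-D+1$ subsets, whereas that proof argues by contradiction over all of $\mathbbm{W}$. Both routes are sound; the paper's chain-rule version stays entirely information-theoretic and, as the commented-out side-information extension suggests, composes easily with extra conditioning, while yours isolates the combinatorics in one explicit linear-algebra lemma and shows the stacking technique can be made constructive under Model~I. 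One small step worth spelling out: $H(\mathbf{A}|\mathbf{Q})\geq H(\mathrm{M}\mathbf{X}|\mathbf{Q})$ follows because $H(\mathrm{M}\mathbf{X}|\mathbf{A},\mathbf{Q})=0$ gives $H(\mathrm{M}\mathbf{X}|\mathbf{Q})=I(\mathrm{M}\mathbf{X};\mathbf{A}|\mathbf{Q})\leq H(\mathbf{A}|\mathbf{Q})$, mirroring the manipulations in~\eqref{eq:1}--\eqref{eq:3}; also note that your displayed "equality" of the nullity with $\dim\bigl(\bigcap_j\mathcal{N}_j\bigr)$ is only needed as an upper bound, which is what your injectivity argument actually delivers.
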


\begin{proof}
Consider an arbitrary JPLT-I protocol that generates a query-answer pair $(\mathrm{Q}^{[\mathrm{W},\mathrm{V}]},\mathrm{A}^{[\mathrm{W},\mathrm{V}]})$ for any given $(\mathrm{W},\mathrm{V})$. 
For simplifying the notation, we denote the random variables $\mathbf{Q}^{[\mathbf{W},\mathbf{V}]}$ and $\mathbf{A}^{[\mathbf{W},\mathbf{V}]}$ by $\mathbf{Q}$ and $\mathbf{A}$, respectively. 
To show that the rate is upper bounded by $L/(K-D+L)$, we need to show that ${H(\mathbf{A})\geq (K-D+L)B}$, where ${B = N\log_2 q}$ is the entropy of a uniformly distributed message over $\mathbbmss{F}_q^{N}$.
%For the ease of notation, we define 

Let $T\triangleq K-D+1$. 
For each $i\in [T]$, let $\mathrm{W}_i \triangleq {\{i,i+1,\dots,i+D-1\}}$. 
Note that ${\mathrm{W}_1,\dots,\mathrm{W}_T\in \mathbbm{W}}$.
By Lemma~\ref{lem:NCJPLT1}, there exists ${\mathrm{V}_i\in \mathbbm{V}_{I}}$ for ${i\in [T]}$ such that ${H(\mathbf{Z}_i |\mathbf{A},\mathbf{Q}) = 0}$, where ${\mathbf{Z}_i\triangleq \mathbf{Z}^{[\mathrm{W}_i,\mathrm{V}_i]}}$. (Note that $\mathrm{V}_i$ is an MDS matrix.)
This readily implies that $H(\mathbf{Z}_1,\dots,\mathbf{Z}_{T}|\mathbf{A},\mathbf{Q})=0$ since $H(\mathbf{Z}_1,\dots,\mathbf{Z}_{T}|\mathbf{A},\mathbf{Q})\leq \sum_{i=1}^{T} H(\mathbf{Z}_i|\mathbf{A},\mathbf{Q}) = 0$. 
Thus, 
\begin{align}
    H(\mathbf{A})&\geq H(\mathbf{A}|\mathbf{Q}) +H(\mathbf{Z}_1,\dots,\mathbf{Z}_{T}|\mathbf
    {Q},\mathbf{A}) \label{eq:1}\\
    &= H(\mathbf{Z}_1,\dots,\mathbf{Z}_{T}|\mathbf{Q})
    +H(\mathbf{A}|\mathbf{Q},\mathbf{Z}_1,\dots,\mathbf{Z}_{T}) \label{eq:2}\\
    &\geq H(\mathbf{Z}_1,\dots,\mathbf{Z}_{T}), \label{eq:3}
\end{align} 
where~\eqref{eq:1} holds because $H(\mathbf{Z}_1,\dots,\mathbf{Z}_{T}|\mathbf{A},\mathbf{Q})=0$, as shown earlier; \eqref{eq:2} follows from the chain rule of conditional entropy;
and \eqref{eq:3} holds because 
(i) $\mathbf{Z}_i$'s are independent from $\mathbf{Q}$, noting that $\mathbf{Z}_i$'s only depend on $\mathbf{X}$, and $\mathbf{Q}$ is independent of $\mathbf{X}$, and 
(ii) $H(\mathbf{A}|\mathbf{Q},\mathbf{Z}_1,\dots,\mathbf{Z}_{T})\geq 0$. 

To lower bound $H(\mathbf{Z}_1,\dots,\mathbf{Z}_{T})$, we proceed as follows. 
By the chain rule of entropy, we have %$H(\mathbf{Z}_1,\dots,\mathbf{Z}_{T})=H(\mathbf{Z}_1)+\sum_{i=2}^{T} H(\mathbf{Z}_{i}|\mathbf{Z}_{1},\dots,\mathbf{Z}_{i-1})$.
%\begin{comment}
\begin{equation}\label{eq:4}
H(\mathbf{Z}_1,\dots,\mathbf{Z}_{T})=H(\mathbf{Z}_1)+\sum_{i=2}^{T} H(\mathbf{Z}_{i}|\mathbf{Z}_{1},\dots,\mathbf{Z}_{i-1}).    
\end{equation}
%\end{comment}
Let $\mathbf{Z}_{i,1},\dots,\mathbf{Z}_{i,L}$ be the $L$ rows of the matrix $\mathbf{Z}_i$, i.e., $\mathbf{Z}_{i,l}\triangleq \mathrm{v}_{i,l} \mathbf{X}_{\mathrm{W}_i}$, where $\mathrm{v}_{i,l}$ is the $l$th row of $\mathrm{V}_i$. 
Note that $\mathbf{Z}_i$ consists of $L$ row-vectors $\mathbf{Z}_{i,1},\dots,\mathbf{Z}_{i,L}$, and 
these vectors are independent because their corresponding coefficient vectors $\mathrm{v}_{i,1},\dots,\mathrm{v}_{i,L}$ are linearly independent. 
Moreover, $\mathbf{Z}_{i,1},\dots,\mathbf{Z}_{i,L}$ are uniform over $\mathbbmss{F}_q^{N}$, i.e., $H(\mathbf{Z}_{i,l})=B$ for $l\in [L]$. 
Thus, $H(\mathbf{Z}_i)=H(\mathbf{Z}_{i,1},\dots,\mathbf{Z}_{i,L})=LB$, particularly, ${H(\mathbf{Z}_1)=LB}$. 
Note, also, that there exists some ${l\in [L]}$ such that $\mathbf{Z}_{i,l}$ is dependent on $\mathbf{X}_{i+D-1}$, i.e., the coefficient of $\mathbf{X}_{i+D-1}$ in the linear combination $\mathbf{Z}_{i,l}$ is nonzero. 
Otherwise, $\mathrm{V}_i$ contains an all-zero column, which contradicts with the fact that $\mathrm{V}_i$ is MDS. 
Moreover, there does not exist any $l\in [L]$ such that $\mathbf{Z}_{j,l}$ for any $j<i$ depends on $\mathbf{X}_{i+D-1}$ (by construction of $\mathrm{W}_1,\dots,\mathrm{W}_i$). 
%$\mathbf{X}_{i+D-1}$ does not belong to the support set of any of the components of $\mathbf{Z}_j$ for any $j<i$ (by construction). 
This implies that there exists at least one row-vector, namely, $\mathbf{Z}_{i,l}$, that is independent of 
%cannot be written as a linear combination of 
the row-vectors pertaining to $\mathbf{Z}_1,\dots,\mathbf{Z}_{i-1}$. % , for every ${i\in \mathcal{T}\setminus \{1\}}$,
%Thus, $\mathbf{Z}_{i,l}$ is independent of $\mathbf{Z}_1,\dots,\mathbf{Z}_{i-1}$. 
This further implies that $H(\mathbf{Z}_{i}|\mathbf{Z}_{1},\dots,\mathbf{Z}_{i-1})\geq H(\mathbf{Z}_{i,l})=B$, and consequently, $\sum_{i=2}^{T} H(\mathbf{Z}_{i}|\mathbf{Z}_{1},\dots,\mathbf{Z}_{i-1})\geq {(T-1)B}$. 
From~\eqref{eq:4}, it then follows that 
\begin{equation}\label{eq:5}
H(\mathbf{Z}_1,\dots,\mathbf{Z}_T)\geq LB + (T-1)B = (K-D+L)B.    
\end{equation} Combining~\eqref{eq:3} and~\eqref{eq:5}, we have $H(\mathbf{A})\geq (K-D+L)B$.%, as was to be shown.
\end{proof}

\subsection{Achievability Scheme}\label{subsec:JPLT1-Ach}
In this section, we present a JPLT-I protocol, termed the \emph{Specialized MDS Code protocol}, which is capacity-achieving for sufficiently large $q$---depending on the parameters $K,D,L$.
An illustrative example of this protocol can be found in Appendix~\ref{app:1}. 

The Specialized MDS Code protocol consists of three steps as described below.\vspace{0.125cm} 

\textbf{Step 1:} Given the demand's support $\mathrm{W}\in \mathbbm{W}$ and the demand's coefficient matrix $\mathrm{V}=[\mathrm{v}_1^{\transpose},\dots,\mathrm{v}_l^{\transpose}]^{\transpose}\in \mathbbm{V}_{I}$, the user constructs a query $\mathrm{Q}^{[\mathrm{W},\mathrm{V}]}$ in the form of a matrix $\mathrm{G}$, such that the user's query, i.e., the matrix $\mathrm{G}$, and the server's corresponding answer $\mathrm{A}^{[\mathrm{W},\mathrm{V}]}$, i.e., the matrix $\mathrm{Y}=\mathrm{G}\mathrm{X}$, 
satisfy the recoverability and joint privacy conditions.

To satisfy the joint privacy condition, it is required that, for any index set $\tilde{\mathrm{W}}\in\mathbbm{W}$, the code generated by the matrix $\mathrm{G}$ contains $L$ codewords whose support are some subsets of $\tilde{\mathrm{W}}$, and the coordinates of these codewords (indexed by $\tilde{\mathrm{W}}$) form an MDS matrix $\tilde{\mathrm{V}}\in \mathbbm{V}_{I}$. 
By the properties of MDS codes~\cite{R2006}, it is easy to verify that the generator matrix of any $[K,K-D+L]$ MDS code satisfies this requirement. 
However, not any such generator matrix is guaranteed to satisfy the recoverability condition.
For satisfying the recoverability condition, it is required that $\mathrm{G}$, as a generator matrix, generates a code that contains $L$ codewords with the support $\mathrm{W}$, and the coordinates of these codewords (indexed by $\mathrm{W}$) must conform to the coefficient matrix $\mathrm{V}$. 
To construct a matrix $\mathrm{G}$ that satisfies these requirements, the user proceeds as follows.

First, the user constructs the parity-check matrix $\myLambda$ of the $[D,L]$ MDS code generated by $\mathrm{V}$. 
Since $\mathrm{V}$ is an MDS matrix, then $\myLambda$ generates a $[D,D-L]$ MDS code. % (i.e., the dual of the MDS code generated by $\mathrm{V}$). 
The user then constructs a ${(D-L)\times K}$ matrix $\mathrm{H}$ that satisfies the following two conditions: (i) the matrix $\mathrm{H}$ contains $\myLambda$ as a submatrix, and (ii) the matrix $\mathrm{H}$ is MDS. 
%(i) $\myLambda$ is the submatrix of $\mathrm{H}$ restricted to columns indexed by $\mathrm{W}$, and 
%(ii) $\mathrm{H}$ is an MDS matrix. 
Since $\myLambda$ is an MDS matrix, constructing $\mathrm{H}$ reduces to extending the $[D,D-L]$ MDS code generated by $\myLambda$ to a $[K,D-L]$ MDS code.
(%The feasibility of such an extension depends on the field size $q$. 
An application of Schwartz-Zippel lemma shows that such an extension is feasible so long as $q$ is sufficiently large.) 
The user then constructs a matrix $\tilde{\mathrm{H}}$ by permuting the columns of $\mathrm{H}$ arbitrarily such that $\myLambda$ is the submatrix of $\tilde{\mathrm{H}}$ restricted to the columns indexed by $\mathrm{W}$. 
%With a slight abuse of notation, 
For simplicity, we also denote $\tilde{\mathrm{H}}$ by $\mathrm{H}$.  
Next, the user constructs a $(K-D+L)\times K$ matrix $\mathrm{G}$ that generates the MDS code defined by the parity-check matrix $\mathrm{H}$. 
(Since $\mathrm{H}$ generates a $[K,D-L]$ MDS code, $\mathrm{H}$ is the parity-check matrix of a $[K,K-D+L]$ MDS code.)
The user then sends $\mathrm{G}$ as the query $\mathrm{Q}^{[\mathrm{W},\mathrm{V}]}$ to the server.\vspace{0.125cm} 

\textbf{Step 2:} Given the query $\mathrm{Q}^{[\mathrm{W},\mathrm{V}]}$, i.e., the matrix $\mathrm{G}$, the server computes the $(K-D+L)\times N$ matrix $\mathrm{Y}\triangleq \mathrm{G}\mathrm{X}$, and sends $\mathrm{Y}$ as the answer $\mathrm{A}^{[\mathrm{W},\mathrm{V}]}$ back to the user.\vspace{0.125cm}  

\textbf{Step 3:} Upon receiving the answer $\mathrm{A}^{[\mathrm{W},\mathrm{V}]}$, i.e., the matrix $\mathrm{Y}$, 
the user constructs a matrix $[\tilde{\mathrm{G}},\tilde{\mathrm{Y}}]$ by performing row operations on the augmented matrix $[\mathrm{G},\mathrm{Y}]$, so as to zero out the submatrix formed by the first $L$ rows and the columns indexed by $[K]\setminus \mathrm{W}$.
Since the submatrix of $[\tilde{\mathrm{G}},\tilde{\mathrm{Y}}]$ formed by the first $L$ rows and the columns indexed by $\mathrm{W}$ (or $[K]\setminus\mathrm{W}$) is equal to the matrix $\mathrm{V}$ (or an all-zero matrix), 
the $l$th row of the demand matrix $\mathrm{Z}^{[\mathrm{W},\mathrm{V}]}$, i.e., $\mathrm{v}_l\mathrm{X}_{\mathrm{W}}$, for ${l\in [L]}$, can be recovered from the $l$th row of the matrix $\tilde{\mathrm{Y}}$.

In the following, we provide a more explicit description of the Specialized MDS Code protocol for the cases in which the coefficient matrix $\mathrm{V}$ generates a GRS code. 
We refer to this protocol as the \emph{Specialized GRS Code protocol}. 
Note that this protocol is applicable for any field size $q\geq K$.\vspace{0.125cm}  

\textbf{Step 1:} Suppose that 
%We now describe how to explicitly construct the matrix $\mathrm{G}$ when the coefficient matrix $\mathrm{V}$ generates a GRS code, i.e., 
the entry $(i,j)$ of $\mathrm{V}$ is given by $\mathrm{V}_{i,j} \triangleq v_{j} \omega_{j}^{i-1}$, where $v_{1},\dots,v_{D}$ are $D$ elements from $\mathbbmss{F}_q\setminus \{0\}$, and $\omega_{1},\dots,\omega_{D}$ are $D$ distinct elements from $\mathbbmss{F}_q$. 
%Let $\mathcal{D} \triangleq \{1,\dots,D\}$. 
The parameters $v_{1},\dots,v_{D}$ and $\omega_{1},\dots,\omega_{D}$ are the multipliers and the evaluation points of the GRS code generated by $\mathrm{V}$, respectively. 
Since the dual of a GRS code is also a GRS code~\cite{R2006}, the parity-check matrix $\myLambda$ of the GRS code generated by $\mathrm{V}$ is a ${(D-L)\times D}$ matrix whose entry $(i,j)$ is given by ${\myLambda_{i,j} \triangleq \lambda_{j} \omega_{j}^{i-1}}$, where \[{\lambda_{j} \triangleq v_{j}^{-1}\prod_{k\in [D]\setminus \{j\}} (\omega_{j}-\omega_{k})^{-1}}.\]
Note that $\lambda_1,\dots,\lambda_D$ are nonzero.  
Extending the $(D-L)\times D$ matrix $\myLambda$ to a $(D-L)\times K$ matrix $\mathrm{H}$---satisfying the conditions (i) and (ii)---is performed as follows. 

Let $\mathrm{W}=\{i_1,\dots,i_D\}$ and $[K]\setminus \mathrm{W} = \{i_{D+1},\dots,i_K\}$, and 
let $\pi$ be a permutation on $[K]$ such that $\pi(j) = i_j$. 
Let $\lambda_{D+1},\dots,\lambda_{K}$ be $K-D$ elements  chosen randomly (with replacement) from $\mathbbmss{F}_p\setminus \{0\}$, and 
let $\omega_{D+1},\dots,\omega_K$ be $K-D$ elements chosen randomly (without replacement) from ${\mathbbmss{F}_p\setminus \{\omega_1,\dots,\omega_D\}}$.
For every ${j\in [D]}$, let the $\pi(j)$th column of $\mathrm{H}$ be the $j$th column of $\myLambda$, and for every $j\in [K]\setminus [D]$, let the $\pi(j)$th column of $\mathrm{H}$ be $[\lambda_j,\lambda_{j}\omega_j,\dots,\lambda_j\omega_j^{D-L-1}]^{\transpose}$. 
Since $\mathrm{H}$ is the parity-check matrix of a ${[K,K-D+L]}$ GRS code, the generator matrix of this code, $\mathrm{G}$, can be constructed by taking the $\pi(j)$th column of $\mathrm{G}$ to be $[\alpha_j,\alpha_j\omega_j,\dots,\alpha_j\omega_j^{K-D+L-1}]^{\transpose}$, where \[\alpha_j\triangleq \lambda_{j}^{-1}\prod_{k\in [K]\setminus \{j\}} (\omega_{j}-\omega_{k})^{-1}.\]
The parameters $\{\alpha_{j}\}_{j\in [K]}$ and $\{\omega_{j}\}_{j\in [K]}$ are the multipliers and the evaluation points of the GRS code generated by $\mathrm{G}$, respectively. The user then sends the matrix $\mathrm{G}$ to the server.\vspace{0.125cm}  

\textbf{Step 2:} Given the matrix $\mathrm{G}$, the server computes the matrix $\mathrm{Y}=\mathrm{G}\mathrm{X}$, where the $i$th row of $\mathrm{Y} =  [Y_1^{\transpose},\dots,Y_{K-D+L}^{\transpose}]^{\transpose}$ is given by \[Y_i \triangleq \sum_{j=1}^{K}\alpha_{j}\omega_j^{i-1}X_{j},\] and sends $\mathrm{Y}$ back to the user.\vspace{0.125cm}  

\textbf{Step 3:} Given the matrix $\mathrm{Y}$, the user recovers the demand matrix $\mathrm{Z}^{[\mathrm{W},\mathrm{V}]}$ as follows. 
%When $\mathrm{V}$ generates a GRS code, $\mathrm{Z}^{[\mathrm{W},\mathrm{V}]}$ can be recovered from the vector $\mathrm{y}$ as follows. 
First, the user constructs $L$ polynomials $f_1(x),\dots,f_L(x)$, where \[f_{l}(x)\triangleq {x^{l-1}\prod_{j=D+1}^{K} (x-\omega_j)}.\] 
For each $l\in [L]$, let $\mathrm{c}_{l}\triangleq [c_{l,1},\dots,c_{l,K-D+L}]^{\transpose}$, where $c_{l,i}$ is the coefficient of the monomial $x^{i-1}$ in the polynomial expansion of $f_{l}(x)$. 
The user then recovers the $l$th row of the demand matrix $\mathrm{Z}^{[\mathrm{W},\mathrm{V}]}$, namely, 
$\mathrm{v}_l\mathrm{X}_{\mathrm{W}}$, by computing $\mathrm{c}_{l}^{\transpose}\mathrm{Y}$.

\begin{proposition}[Symmetry Property of MDS Codes]\label{prop:MDSSymmetry}
Given any $[n,k]$ MDS code, for any ${\mathrm{S}\subseteq [n]}$ such that ${|\mathrm{S}|\geq n-k+1}$, the code space contains a unique ${(|\mathrm{S}|-n+k)}$-dimensional subspace on the coordinates indexed by $\mathrm{S}$, and any basis of this subspace (restricted to the coordinates indexed by $\mathrm{S}$) forms an MDS matrix.
\end{proposition}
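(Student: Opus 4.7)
The plan is to define $V_{\mathrm{S}}$ as the set of codewords of the given $[n,k]$ MDS code $\mathscr{C}$ that are supported on $\mathrm{S}$, i.e., whose entries indexed by $\mathrm{S}^{c}\triangleq[n]\setminus\mathrm{S}$ are all zero. I would then argue that $V_{\mathrm{S}}$ is a subspace of the expected dimension, is uniquely determined, and has the claimed MDS restriction on $\mathrm{S}$.

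First, to determine $\dim V_{\mathrm{S}}$, I would apply rank-nullity to the projection $\pi:\mathscr{C}\to\mathbbmss{F}_q^{|\mathrm{S}^{c}|}$ that sends each codeword to its restriction on $\mathrm{S}^{c}$, noting that $V_{\mathrm{S}}=\ker\pi$. The hypothesis $|\mathrm{S}|\geq n-k+1$ gives $|\mathrm{S}^{c}|\leq k-1$, and since any set of at most $k$ coordinates of an MDS code extends to an information set, the projection $\pi$ is surjective. Hence $\dim V_{\mathrm{S}}=k-|\mathrm{S}^{c}|=|\mathrm{S}|-n+k$. Uniqueness then follows automatically: every subspace of $\mathscr{C}$ whose codewords are all supported on $\mathrm{S}$ is contained in $V_{\mathrm{S}}$ by definition, so any such subspace of dimension $|\mathrm{S}|-n+k$ must coincide with $V_{\mathrm{S}}$.

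For the MDS claim, let $\mathscr{C}_{\mathrm{S}}$ denote the code of length $|\mathrm{S}|$ and dimension $|\mathrm{S}|-n+k$ obtained by restricting $V_{\mathrm{S}}$ to the coordinates in $\mathrm{S}$. Any nonzero codeword of $V_{\mathrm{S}}$ is a nonzero codeword of $\mathscr{C}$, hence has Hamming weight at least $n-k+1$, and this weight is attained entirely on $\mathrm{S}$ since the codeword vanishes on $\mathrm{S}^{c}$. Therefore the minimum distance of $\mathscr{C}_{\mathrm{S}}$ is at least $n-k+1=|\mathrm{S}|-(|\mathrm{S}|-n+k)+1$, which matches the Singleton upper bound; consequently $\mathscr{C}_{\mathrm{S}}$ is MDS, and any basis of $V_{\mathrm{S}}$ restricted to $\mathrm{S}$ is a generator matrix of an MDS code, i.e., an MDS matrix. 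The one non-trivial step in this plan is the surjectivity of $\pi$, which is precisely where the MDS hypothesis on $\mathscr{C}$ enters; once that is in place, the rest is a routine dimension count followed by a standard Singleton-bound argument.
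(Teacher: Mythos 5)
Your proof is correct, and it takes a genuinely different route from the paper's on two of the three claims. For existence and the dimension count, you work with the shortened code: you define $V_{\mathrm{S}}$ as the kernel of the projection of $\mathscr{C}$ onto the coordinates in $[n]\setminus\mathrm{S}$, and get $\dim V_{\mathrm{S}}=k-(n-|\mathrm{S}|)$ exactly from rank--nullity, the surjectivity of the projection being the standard MDS fact that every set of at most $k$ coordinates sits inside an information set. The paper instead constructs an explicit staircase family of $|\mathrm{S}|-n+k$ codewords whose supports are nested $d$-windows inside $\mathrm{S}$ (using the fact that an MDS code has a codeword with any prescribed support of size $d=n-k+1$) and checks linear independence by a triangular-entry argument. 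Your uniqueness argument is also different and notably cleaner: since every subspace of codewords supported on $\mathrm{S}$ is contained in $V_{\mathrm{S}}$ and $V_{\mathrm{S}}$ has exactly the claimed dimension, maximality gives uniqueness for free, whereas the paper proves uniqueness by contradiction, running Gauss--Jordan elimination on a putative larger span and exhibiting a codeword of weight below the minimum distance. The final step --- that the restriction of $V_{\mathrm{S}}$ to $\mathrm{S}$ meets the Singleton bound because every nonzero codeword of $\mathscr{C}$ has weight at least $n-k+1$ concentrated on $\mathrm{S}$ --- is essentially identical in both arguments (and, with the paper's definition of an MDS matrix as one generating an MDS code, it covers the ``any basis'' clause, since every basis generates the same restricted code). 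What your approach buys is brevity and a uniqueness proof with no case analysis; what the paper's constructive approach buys is an explicit basis of low-weight codewords, which is closer in spirit to how the achievability scheme is built, but logically the two are interchangeable.
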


\begin{proof}
Consider an arbitrary $[n,k]$ MDS code $\mathscr{C}$. 
Let $d\triangleq n-k+1$ be the minimum distance of $\mathscr{C}$. 
By the properties of MDS codes~\cite{R2006}, for any $d$-subset $\mathrm{T}\subseteq [n]$, the code $\mathscr{C}$ has a codeword whose support is $\mathrm{T}$.
Consider an arbitrary $\mathrm{S}\subseteq [n]$ such that $|\mathrm{S}|\geq d$. 
Let $s\triangleq |\mathrm{S}|$, and $\mathrm{S} \triangleq \{l_1,\dots,l_{s}\}$. 
Let $m\triangleq s-d+1$.
Note that $m\leq s$. 
For each ${i\in [m]}$, let $\mathrm{S}_i \triangleq \{l_{i},\dots,l_{i+d-1}\}$, and 
let $\mathrm{c}_i$ be a codeword of $\mathscr{C}$ whose support is $\mathrm{S}_i$. 
Note that $\mathrm{c}_i$'s are row-vectors of length $n$. 
Consider an $m\times n$ matrix $\mathrm{C}$ whose $i$th row is $\mathrm{c}_i$, i.e., $\mathrm{C}\triangleq [\mathrm{c}_1^{\transpose},\dots,\mathrm{c}_{m}^{\transpose}]^{\transpose}$. 
Note that the $l_{i+d-1}$th entry of the $i$th row of $\mathrm{C}$ is nonzero for each $i\in [m]$, and the $l_{i+d-1}$th entry of the $j$th row of $\mathrm{C}$ is zero for any $j<i$. 
This readily implies that $\mathrm{rank}(\mathrm{C})=m$. Thus, the row space of $\mathrm{C}$, i.e., the (linear) span of the codewords $\mathrm{c}_1,\dots,\mathrm{c}_m$, is an $m$-dimensional subspace on the coordinates indexed by $\mathrm{S}=\cup_{i=1}^{m} \mathrm{S}_i$. 
Note that $m={s-d+1}={s-(n-k+1)+1}={s-n+k}$.
This proves that the code space contains an $(s-n+k)$-dimensional subspace on the coordinates indexed by $\mathrm{S}$. 

Next, we show that any basis of the subspace spanned by the rows of $\mathrm{C}$ (restricted to the coordinates indexed by $\mathrm{S}$) forms an MDS matrix. 
Consider an arbitrary basis of this subspace. 
The matrix formed by this basis can be written as $\mathrm{R}\mathrm{C}$ for some $m\times m$ invertible matrix $\mathrm{R}$. 
Let $\hat{\mathrm{C}}$ be an $m\times s$ submatrix of $\mathrm{C}$ formed by the columns indexed by $\mathrm{S}$.
We need to show that $\mathrm{R}\hat{\mathrm{C}}$ is an MDS matrix. 
If $\hat{\mathrm{C}}$ is an MDS matrix, any $m\times m$ submatrix of $\hat{\mathrm{C}}$, and consequently, any $m\times m$ submatrix of $\mathrm{R}\hat{\mathrm{C}}$, is invertible, and hence, $\mathrm{R}\hat{\mathrm{C}}$ is an MDS matrix. 
Thus, it suffices to show that $\hat{\mathrm{C}}$ is an MDS matrix. 
Consider the $[s,m]$ code $\hat{\mathscr{C}}$ generated by $\hat{\mathrm{C}}$. 
The minimum distance of $\hat{\mathscr{C}}$ is at most ${s-m+1=n-k+1}$ ($=d$). 
The weight of the codewords of $\hat{\mathscr{C}}$ corresponding to the rows of $\hat{\mathrm{C}}$ is $d$. 
Moreover, any other (nonzero) codeword of $\hat{\mathscr{C}}$ is a linear combination of the rows of $\hat{\mathrm{C}}$, and has a weight at least $d$.  
(If $\hat{\mathscr{C}}$ has a codeword of weight less than $d$, then $\mathscr{C}$ must have a codeword of weight less than $d$, which is a contradiction since the minimum distance of $\mathscr{C}$ is $d$.) 
Thus, the minimum distance of $\hat{\mathscr{C}}$ is $d$ ($=s-m+1$), and $\hat{\mathscr{C}}$ is an $[s,m]$ MDS code.%, as was to be shown. 

Now, we prove the uniqueness by the way of contradiction.
Suppose that the code space contains two distinct subspaces on the coordinates indexed by $\mathrm{S}$. 
For $i\in \{1,2\}$, let $\mathrm{M}_i$ be an $m\times n$ matrix formed by an arbitrary basis of the $i$th subspace. 
Consider the matrix $\mathrm{M}=[\mathrm{M}_1^{\transpose},\mathrm{M}_2^{\transpose}]^{\transpose}$. 
Note that the rows of $\mathrm{M}$ are codewords of $\mathscr{C}$.
Obviously, $\tilde{m}\triangleq \mathrm{rank}(\mathrm{M})>m$. 
This is because $\mathrm{rank}(\mathrm{M}_1)=m$, and there exists at least one row in $\mathrm{M}_2$ that is linearly independent of the rows of $\mathrm{M}_1$. 
Let $\tilde{\mathrm{M}}$ be an $\tilde{m}\times n$ matrix formed by an arbitrary basis of the row space of $\mathrm{M}$.
Note that $\mathrm{rank}(\tilde{\mathrm{M}})=\tilde{m}$. 
By performing Gauss-Jordan elimination on a properly chosen column-permutation of $\tilde{\mathrm{M}}$, 
we can obtain a matrix of the form $[\mathrm{I},\mathrm{P},0]$, where $\mathrm{I}$ is an $\tilde{m}\times\tilde{m}$ identity matrix, $\mathrm{P}$ is an $\tilde{m}\times (n-\tilde{m})$ matrix, and $0$ is an $\tilde{m}\times (n-s)$ all-zero matrix. 
Note that the row space of $[\mathrm{I},\mathrm{P},0]$ is the same as the row space of $\tilde{\mathrm{M}}$ which is itself the same as the row space of $\mathrm{M}$, and hence, the rows of $[\mathrm{I},\mathrm{P},0]$ are codewords of $\mathscr{C}$. 
Fix an arbitrary $i\in [\tilde{m}]$. 
Consider the codeword corresponding to the $i$th row of $[\mathrm{I},\mathrm{P},0]$. 
The weight of this codeword is at most $s-\tilde{m}+1$, because there is only one nonzero coordinate within the first $\tilde{m}$ coordinates, and there are at most $s-\tilde{m}$ nonzero coordinates within the last $n-\tilde{m}$ coordinates. 
Thus, the minimum distance of $\mathscr{C}$ is at most $s-\tilde{m}+1$ which is strictly less than $s-m+1=s-(s-n+k)+1 = n-k+1=d$ since $\tilde{m}>m$. 
This is a contradiction because the minimum distance of $\mathscr{C}$ is $d$. 
%This completes the proof of uniqueness. 
\end{proof}

\begin{lemma}\label{lem:JPLT1-Ach}
The Specialized MDS Code protocol is a JPLT-I protocol, and achieves the rate $L/(K-D+L)$. 
\end{lemma}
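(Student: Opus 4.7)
The plan is to verify three separate properties of the Specialized MDS Code protocol: the rate bound, the recoverability condition, and the joint privacy condition.

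Rate and recoverability are the easy parts. The server's answer is the matrix $\mathrm{Y}=\mathrm{G}\mathrm{X}$ of size $(K-D+L)\times N$ over $\mathbbmss{F}_q$, so $H(\mathbf{A})\le (K-D+L)B$, while $H(\mathbf{Z})=LB$, giving the claimed rate $L/(K-D+L)$ (which matches the converse bound from Lemma~\ref{lem:JPLT1-Conv} and is therefore tight). For recoverability, I would apply Proposition~\ref{prop:MDSSymmetry} to the $[K,K-D+L]$ MDS code $\mathscr{C}$ generated by $\mathrm{G}$ with $\mathrm{S}=\mathrm{W}$. The minimum distance of $\mathscr{C}$ is $D-L+1$, so $|\mathrm{S}|=D\ge D-L+1$, and the proposition yields a unique $L$-dimensional subspace of $\mathscr{C}$ supported on $\mathrm{W}$. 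Because the parity-check matrix $\mathrm{H}$ of $\mathscr{C}$ was built so that its restriction to the columns indexed by $\mathrm{W}$ equals $\myLambda$ (the parity-check matrix of the code generated by $\mathrm{V}$), this unique subspace coincides with the row space of $\mathrm{V}$ once restricted to $\mathrm{W}$. Standard row reduction on $[\mathrm{G},\mathrm{Y}]$ then exposes the rows $\mathrm{v}_l \mathrm{X}_{\mathrm{W}}$, which is exactly what Step~3 of the protocol performs.

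The hard part is the joint privacy condition: I must show $\Pr(\mathbf{W}=\tilde{\mathrm{W}}\mid \mathbf{Q}=\mathrm{G})=1/C_{K,D}$ for every $\tilde{\mathrm{W}}\in\mathbbm{W}$. The intuition driving the proof is the symmetry provided by Proposition~\ref{prop:MDSSymmetry}: for \emph{any} $\tilde{\mathrm{W}}\in\mathbbm{W}$, the code $\mathscr{C}$ generated by $\mathrm{G}$ contains a unique $L$-dimensional subspace of codewords supported on $\tilde{\mathrm{W}}$, and any basis of this subspace (restricted to $\tilde{\mathrm{W}}$) is itself MDS, i.e., lies in $\mathbbm{V}_{I}$. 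Thus the same transmitted $\mathrm{G}$ is a valid query for \emph{every} demand support $\tilde{\mathrm{W}}$. The plan is to upgrade this combinatorial symmetry to a probabilistic one: I would argue that the user's extension step (choosing how to extend the $(D-L)\times D$ matrix $\myLambda$ to a $(D-L)\times K$ MDS matrix $\mathrm{H}$ and then permuting its columns) induces, for each $\tilde{\mathrm{W}}$, a measure on the set of MDS codes that is invariant under the choice of $\tilde{\mathrm{W}}$. Concretely, I would fix $\mathrm{G}$ and count (or match via a measure-preserving bijection) the triples $(\tilde{\mathrm{W}},\tilde{\mathrm{V}},\tilde{\mathrm{R}})$ of demand support, coefficient matrix, and protocol randomness that map to $\mathrm{G}$; the uniqueness given by Proposition~\ref{prop:MDSSymmetry} pins down $\tilde{\mathrm{V}}$ from $(\mathrm{G},\tilde{\mathrm{W}})$ up to a row-basis choice, and the uniform choice of the randomness $(\lambda_{D+1},\dots,\lambda_K,\omega_{D+1},\dots,\omega_K)$ in the GRS realization (or the analogous uniform random extension in the abstract protocol) makes the number of preimages independent of $\tilde{\mathrm{W}}$. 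Combined with the assumed uniform distribution of $\mathbf{W}$ over $\mathbbm{W}$ and of $\mathbf{V}$ over $\mathbbm{V}_{I}$, this yields the required uniform posterior.

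The main obstacle is the book-keeping in this last step: articulating the symmetry between different $\tilde{\mathrm{W}}$'s cleanly, so that the count (or the pushforward measure on $\mathrm{G}$) is manifestly $\tilde{\mathrm{W}}$-independent. I expect the cleanest route is to prove a single lemma of the form "$\Pr(\mathbf{Q}=\mathrm{G}\mid \mathbf{W}=\tilde{\mathrm{W}})$ does not depend on $\tilde{\mathrm{W}}$'', by exhibiting for each pair $\tilde{\mathrm{W}},\tilde{\mathrm{W}}'$ an explicit bijection between the sets of random choices that produce $\mathrm{G}$. Once that symmetry is in hand, Bayes' rule together with the uniform prior on $\mathbf{W}$ immediately delivers joint privacy, and combining all three parts completes the proof of the lemma.
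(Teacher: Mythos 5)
Your proposal is correct in substance and rests on the same backbone as the paper---the fact that $\mathrm{G}$ generates a $[K,K-D+L]$ MDS code together with Proposition~\ref{prop:MDSSymmetry}---but each of the three sub-arguments is executed differently. For the rate, the paper does not lean on the converse: it observes that $\mathrm{G}$ has full rank $K-D+L$, so the rows of $\mathrm{Y}=\mathrm{G}\mathrm{X}$ are independent coded combinations and $H(\mathbf{A})=(K-D+L)B$ exactly; your route (upper-bound $H(\mathbf{A})$, then invoke Lemma~\ref{lem:JPLT1-Conv}) is valid, but only after privacy and recoverability are established, and is less self-contained. For recoverability, the paper bypasses Proposition~\ref{prop:MDSSymmetry} and simply computes $\mathrm{U}\mathrm{H}^{\transpose}$: the columns indexed by $\mathrm{W}$ contribute $\mathrm{V}\myLambda^{\transpose}=0$ and the remaining columns of $\mathrm{U}$ are zero, so the rows of $\mathrm{U}$ are codewords of the code generated by $\mathrm{G}$; your detour through the unique $L$-dimensional subspace also works, but the one-line justification elides a small argument (codewords supported in $\mathrm{W}$, restricted to $\mathrm{W}$, satisfy the $\myLambda$-checks and hence lie in the row space of $\mathrm{V}$; equality then follows because both spaces have dimension $L$). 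For joint privacy, your plan---show $\Pr(\mathbf{Q}=\mathrm{G}\mid\mathbf{W}=\tilde{\mathrm{W}})$ is independent of $\tilde{\mathrm{W}}$ via a bijection on the user's random choices and conclude by Bayes' rule---is precisely the rigorous version of what the paper compresses into one sentence: by Proposition~\ref{prop:MDSSymmetry} every $D$-subset carries a unique $L$-dimensional subspace whose bases restricted to that subset are MDS, and the paper simply asserts that, from the server's view, each such subspace is equally likely to be the demand's row space. So the ``book-keeping obstacle'' you flag is not a step the paper carries out either; if you do write it out, you will need to note that the number of MDS bases $\tilde{\mathrm{V}}$ of an $L$-dimensional subspace is the same for every $\tilde{\mathrm{W}}$, and that the randomized extension of $\myLambda$ to $\mathrm{H}$ (made explicit only in the GRS instantiation) must be specified symmetrically in $\tilde{\mathrm{W}}$ for your measure-preserving bijection to exist.
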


\begin{proof}
Since the answer $\mathrm{Y}=\mathrm{G}\mathrm{X}$ is a matrix with $K-D+L$ rows, and the rows of this matrix are linearly independent coded combinations of the messages $\mathbf{X}_1,\dots,\mathbf{X}_K$ (noting that the matrix $\mathrm{G}$ has full rank), the entropy of the answer is given by ${(K-D+L)B}$, where $B$ is the entropy of a message. 
Thus, the rate of this protocol is $L/(K-D+L)$. 

Next, we prove that the joint privacy condition is satisfied. 
Note that the matrix $\mathrm{G}$ generates a $[K,K-D+L]$ MDS code with minimum distance $D-L+1$. 
By the symmetry property of MDS codes (Proposition~\ref{prop:MDSSymmetry}), the row space of $\mathrm{G}$ contains a unique $L$-dimensional subspace on every $D$-subset of coordinates. 
Note that each of these $L$-dimensional subspaces (corresponding to a distinct $D$-subset of coordinates) %(each subspace pertaining to a distinct $D$-subset of coordinates) 
is equally likely to be the subspace spanned by the rows of the demand's global coefficient matrix, from the server's perspective. 
Combining these arguments, given the matrix $\mathrm{G}$, every $D$-subset of message indices is equally likely to be the demand's support. 
This completes the proof of joint privacy. %in the support set of the demand are jointly private.

The recoverability follows readily from the construction. 
Let ${\mathrm{U}}$ be the global coefficient matrix of the demand. %, where $\mathrm{u}_{l}$ is a row-vector of length $K$ such that $\mathrm{u}_{l}$ restricted to its components indexed by $\mathrm{W}$ is equal to the vector $\mathrm{v}_l$, and the rest of the components of $\mathrm{u}_{l}$ are all zero. 
%Note that $\mathrm{V}$ is a submatrix of $\mathrm{U}$ formed by the columns indexed by $\mathrm{W}$. 
%To prove recoverability, 
We need to show that %the rows of $\mathrm{U}$ are in the row space of $\mathrm{G}$, i.e., 
the rows of $\mathrm{U}$ are $L$ codewords of the code generated by $\mathrm{G}$. 
Since $\mathrm{H}$ is the parity-check matrix of the code generated by $\mathrm{G}$, this is equivalent to showing that $\mathrm{U}\mathrm{H}^{\transpose}$ is an all-zero matrix. %, and this holds because of the following two reasons. 
This can be shown as follows. 
Firstly, the submatrix of $\mathrm{U}\mathrm{H}^{\transpose}$ restricted to the columns indexed by $\mathrm{W}$ is equal to $\mathrm{V}\myLambda^{\transpose}$, and $\mathrm{V}\myLambda^{\transpose}$ is an all-zero matrix because $\myLambda$ is the parity-check matrix of the code generated by $\mathrm{V}$. 
Secondly, the submatrix of $\mathrm{U}\mathrm{H}^{\transpose}$ formed by the columns indexed by $[K]\setminus \mathrm{W}$ is an all-zero matrix because the submatrix of $\mathrm{U}$ restricted to these columns is an all-zero matrix. Thus, $\mathrm{U}\mathrm{H}^{\transpose}$ is an all-zero matrix. 
This completes the proof of recoverability.%, as was to be shown.
\end{proof}

\section{Proof of Theorem~\ref{thm:JPLT2}}\label{sec:JPLT2}
The proof of converse is given in Section~\ref{subsec:JPLT2-Conv}, and the achievability scheme is presented in Section~\ref{subsec:JPLT2-Ach}.

\subsection{Converse Proof}\label{subsec:JPLT2-Conv}
%Similar to Lemma~\ref{lem:NCJPLT1} for the JPLT-I protocols, 
The converse proof for the JPLT-II problem relies on the following result. 

% provides a necessary %(but not sufficient) 
%condition for any JPLT-II protocol.

%This result follows from the exact same line as in the proof of Lemma~\ref{lem:NCJPLT1}, except where $\mathbbm{V}_{I}$ is replaced by $\mathbbm{V}_{I\hspace{-0.04cm}I}$.
%The proof is omitted to avoid repetition. 

\begin{lemma}\label{lem:NCJPLT2}
Given any JPLT-II protocol, for any $\tilde{\mathrm{W}}\in\mathbbm{W}$, there must exist $\tilde{\mathrm{V}}\in\mathbbm{V}_{I\hspace{-0.04cm}I}$ such that \[H(\mathbf{Z}^{[\tilde{\mathrm{W}},\tilde{\mathrm{V}}]}| \mathbf{A}, \mathbf{Q})= 0.\] 		
\end{lemma}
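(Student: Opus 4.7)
The plan is to mirror the proof of Lemma~\ref{lem:NCJPLT1} almost verbatim, since the only structural difference between JPLT-I and JPLT-II is that the coefficient matrix ranges over $\mathbbm{V}_{I\hspace{-0.04cm}I}$ (full-rank) rather than $\mathbbm{V}_{I}$ (MDS), while the joint privacy condition—which is the driving ingredient—is identical in the two models. I would proceed by contradiction.

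First, I fix an arbitrary JPLT-II protocol and consider the query $\mathrm{Q}$ and answer $\mathrm{A}$ it produces on an arbitrary instance $(\mathrm{W},\mathrm{V})$. Fixing $\tilde{\mathrm{W}}\in\mathbbm{W}$, I suppose for contradiction that no $\tilde{\mathrm{V}}\in\mathbbm{V}_{I\hspace{-0.04cm}I}$ satisfies $H(\mathbf{Z}^{[\tilde{\mathrm{W}},\tilde{\mathrm{V}}]}\mid \mathbf{A},\mathbf{Q})=0$. The key step is to observe that if $\mathbf{W}=\tilde{\mathrm{W}}$ were possible given $\mathbf{Q}=\mathrm{Q}$, then—conditioning on that event and letting $\tilde{\mathrm{V}}$ be the realization of $\mathbf{V}$—the recoverability condition $H(\mathbf{Z}\mid \mathbf{A},\mathbf{Q},\mathbf{W},\mathbf{V})=0$ together with $\mathbf{V}\in\mathbbm{V}_{I\hspace{-0.04cm}I}$ would produce some $\tilde{\mathrm{V}}\in\mathbbm{V}_{I\hspace{-0.04cm}I}$ for which the corresponding demand is determined by $(\mathbf{A},\mathbf{Q})$, contradicting the supposition. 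Hence, $\Pr(\mathbf{W}=\tilde{\mathrm{W}}\mid \mathbf{Q}=\mathrm{Q})=0$.

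Finally, this conclusion is incompatible with joint privacy, which asserts
\[
\Pr(\mathbf{W}=\tilde{\mathrm{W}}\mid \mathbf{Q}=\mathrm{Q}) = \Pr(\mathbf{W}=\tilde{\mathrm{W}}) = 1/C_{K,D} \neq 0,
\]
yielding the desired contradiction and completing the proof.

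The main obstacle, if any, is conceptual rather than technical: one must verify that the argument used for Model~I goes through in Model~II despite the weaker structural constraint on $\mathbf{V}$. This is not actually a difficulty, however, because the proof uses the existence of \emph{some} $\tilde{\mathrm{V}}$ in the feasible set for which the demand is recoverable; enlarging the feasible set from $\mathbbm{V}_{I}$ to $\mathbbm{V}_{I\hspace{-0.04cm}I}$ only makes this existence easier to secure. Thus, the proof is essentially identical, with the replacement $\mathbbm{V}_{I}\to\mathbbm{V}_{I\hspace{-0.04cm}I}$ throughout, and it could simply be stated by appealing to the argument of Lemma~\ref{lem:NCJPLT1}.
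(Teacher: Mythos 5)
Your proposal is correct and matches the paper's proof exactly: the paper proves Lemma~\ref{lem:NCJPLT2} by invoking the contradiction argument of Lemma~\ref{lem:NCJPLT1} verbatim with $\mathbbm{V}_{I}$ replaced by $\mathbbm{V}_{I\hspace{-0.04cm}I}$, which is precisely what you do. Your added remark that enlarging the feasible set of coefficient matrices only makes the required existence easier is a correct and reasonable sanity check.
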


\begin{proof}
The result follows from the same argument as in the proof of Lemma~\ref{lem:NCJPLT1}, except where $\mathbbm{V}_{I}$ is replaced by $\mathbbm{V}_{I\hspace{-0.04cm}I}$. %, and hence, omitted to avoid repetition.
\end{proof}

%The result of Lemma~\ref{lem:NCJPLT2} is equivalent to the necessary (but not sufficient) condition for joint privacy in Section~\ref{sec:LINEAR} for linear JPLT-II protocols. 

%The result of Lemma~\ref{lem:NCJPLT2} is general and applies to any (linear or non-linear) JPLT-II protocol. Notwithstanding, this result establishes an interesting connection between linear JPLT-II protocols and linear codes: for any linear JPLT-II protocol, the coefficient matrix of the server's answer to the user's query must generate a (linear) code of length $K$ such that, when punctured at any $K-D$ coordinates, the resulting code contains a group of $L$ codewords that are \emph{linearly independent}. 
%Note that this condition is weaker than the one in Lemma~\ref{lem:NCJPLT1} for JPLT-I protocols, because it only requires having groups of linearly independent codewords, instead of MDS codewords.
%Recall that a similar, yet stronger, condition must hold for any linear JPLT-I protocol (see Lemma~\ref{lem:NCJPLT1}). % as each of the underlying groups of $L$ codewords must be MDS (instead of only linearly independent). 

\begin{lemma}\label{lem:JPLT2-Conv}
The rate of any JPLT-II protocol for $K$ messages, demand's support size $D$, and demand's dimension $L$ is upper bounded by $L/(K-D+L)$.
\end{lemma}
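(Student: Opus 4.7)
The plan is to show that, for every realization $\mathrm{Q}$ of the query, the answer $\mathbf{A}$ must encode at least $K-D+L$ linearly independent combinations of the messages, which directly yields $H(\mathbf{A}) \geq (K-D+L)B$ and hence the desired rate bound. I would fix an arbitrary $\mathrm{Q}$ in the support of $\mathbf{Q}$ and, for each $\tilde{\mathrm{W}} \in \mathbbm{W}$, invoke Lemma~\ref{lem:NCJPLT2} to choose a full-rank matrix $\tilde{\mathrm{V}}(\tilde{\mathrm{W}}) \in \mathbbm{V}_{I\hspace{-0.04cm}I}$ for which the demand $\mathbf{Z}^{[\tilde{\mathrm{W}},\tilde{\mathrm{V}}(\tilde{\mathrm{W}})]}$ is a deterministic function of $\mathbf{A}$ given $\mathbf{Q}=\mathrm{Q}$. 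Writing $\mathrm{U}_{\tilde{\mathrm{W}}}$ for the associated $L \times K$ global coefficient matrix (rank $L$, with all columns outside $\tilde{\mathrm{W}}$ equal to zero), I would then form the subspace $V(\mathrm{Q}) \subseteq \mathbbmss{F}_q^{K}$ spanned by the rows of all the matrices $\mathrm{U}_{\tilde{\mathrm{W}}}$. By construction, for each $\tilde{\mathrm{W}}$ the subspace $V(\mathrm{Q})$ contains $L$ linearly independent vectors whose support lies in $\tilde{\mathrm{W}}$.

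The heart of the argument, which I would carry out by contradiction, is a purely linear-algebraic claim: if $V \subseteq \mathbbmss{F}_q^K$ is a subspace that, for every $D$-subset $S \subseteq [K]$, contains $L$ linearly independent vectors supported on $S$, then $\dim V \geq K-D+L$. To prove it, I would pass to the dual. Suppose for contradiction that $\dim V \leq K-D+L-1$, and let $\mathrm{H}$ be a parity-check matrix of $V$ with linearly independent rows; then $\mathrm{H}$ has at least $D-L+1$ rows, so its column space has dimension at least $D-L+1$ and there exist $D-L+1$ linearly independent columns of $\mathrm{H}$. Embedding this set of column indices into any $D$-subset $S$ forces the submatrix $\mathrm{H}_S$ to satisfy $\mathrm{rank}(\mathrm{H}_S) \geq D-L+1$, hence the subspace of $V$ consisting of vectors supported on $S$ has dimension $D - \mathrm{rank}(\mathrm{H}_S) \leq L-1$, contradicting the hypothesis on $V$.

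With the subspace-dimension lemma in hand, $\dim V(\mathrm{Q}) \geq K-D+L$ for every $\mathrm{Q}$. Picking any basis $\mathrm{u}_1,\dots,\mathrm{u}_{K-D+L}$ of $V(\mathrm{Q})$ extracted from the rows of the $\mathrm{U}_{\tilde{\mathrm{W}}}$'s, each $\mathrm{u}_i \mathbf{X}$ is a deterministic function of $\mathbf{A}$ given $\mathbf{Q}=\mathrm{Q}$, so
\[
H(\mathbf{A}\mid \mathbf{Q}=\mathrm{Q}) \;\geq\; H(\mathrm{u}_1 \mathbf{X},\dots,\mathrm{u}_{K-D+L} \mathbf{X} \mid \mathbf{Q}=\mathrm{Q}) \;=\; (K-D+L)B,
\]
where the equality uses that $\mathbf{X}$ is uniform over $\mathbbmss{F}_q^{K \times N}$ and independent of $\mathbf{Q}$. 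Averaging over $\mathrm{Q}$ then yields $H(\mathbf{A}) \geq H(\mathbf{A}\mid \mathbf{Q}) \geq (K-D+L)B$, which is equivalent to the claimed rate upper bound. I expect the main obstacle to be the subspace-dimension lemma above: this is precisely the point at which the constructive JPLT-I chain (which grew one new dimension per nested subset $\mathrm{W}_i$ by exploiting that MDS matrices have no all-zero columns) breaks down for full-rank-but-not-necessarily-MDS matrices, forcing the dual-code contradiction route sketched here.
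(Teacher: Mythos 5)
Your proposal is correct, and its core argument is genuinely different from the paper's. Both proofs start from Lemma~\ref{lem:NCJPLT2} and reduce the entropy bound $H(\mathbf{A})\geq (K-D+L)B$ to a rank statement about the span of the global coefficient matrices of the recoverable demands; your per-realization conditioning on $\mathbf{Q}=\mathrm{Q}$ and the step $H(\mathbf{A}\mid\mathbf{Q}=\mathrm{Q})\geq H(\mathrm{u}_1\mathbf{X},\dots,\mathrm{u}_{K-D+L}\mathbf{X}\mid\mathbf{Q}=\mathrm{Q})=(K-D+L)B$ is a harmless (in fact slightly more careful) variant of the paper's averaged chain-rule computation, since it lets $\tilde{\mathrm{V}}(\tilde{\mathrm{W}})$ depend on the query realization. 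Where you diverge is the rank bound itself. The paper stacks all matrices $\mathrm{U}^{j}_{i}$, assumes $\mathrm{rank}(\mathrm{M})=m<K-D+L$, Gauss--Jordan reduces to $[\mathrm{I},\mathrm{P}]$, and shows $\mathrm{P}=0$ by sweeping over many $D$-subsets of the form ($L$-subset of $[m]$) $\cup$ ($(D-L)$-subset of $[m+1:K]$), before extracting a contradiction from one further subset $[L-1]\cup[m+1:m+D-L+1]$. You instead isolate a clean dual-space lemma: if a subspace $V\subseteq\mathbbmss{F}_q^K$ contains, for every $D$-subset $S$, an $L$-dimensional space of vectors supported inside $S$, then $\dim V\geq K-D+L$; and you prove it in one stroke by taking a parity-check matrix $\mathrm{H}$ of $V$ with $\mathrm{rank}(\mathrm{H})\geq D-L+1$, choosing $D-L+1$ linearly independent columns, embedding their index set in a $D$-subset $S$, and noting $\dim\{x\in V:\mathrm{supp}(x)\subseteq S\}=D-\mathrm{rank}(\mathrm{H}_S)\leq L-1$, a contradiction. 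This is shorter, avoids the $\mathrm{P}=0$ bookkeeping, needs only one recoverable demand per support set rather than all of them, and packages the combinatorial content into a reusable statement that covers JPLT-I as a special case; the paper's primal elimination argument, by contrast, exposes more explicit structure (the hypothetical low-rank row space would have to be a coordinate subspace) at the cost of a longer case analysis.
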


\begin{proof}
Consider an arbitrary JPLT-II protocol that generates a pair $(\mathrm{Q}^{[\mathrm{W},\mathrm{V}]},\mathrm{A}^{[\mathrm{W},\mathrm{V}]})$ for any given $(\mathrm{W},\mathrm{V})$. 
We denote $\mathbf{Q}^{[\mathbf{W},\mathbf{V}]}$ and $\mathbf{A}^{[\mathbf{W},\mathbf{V}]}$ by $\mathbf{Q}$ and $\mathbf{A}$, respectively. 
To show the rate upper bound, we need to show that ${H(\mathbf{A})\geq (K-D+L)B}$, where $B = N\log_2 q$ is the entropy of a uniformly distributed message over $\mathbbmss{F}_q^{N}$. 
Let $T\triangleq C_{K,D}$. Consider an arbitrary ordering of all elements in $\mathbbm{W}$, say, $\mathrm{W}_1,\dots,\mathrm{W}_{T}$, where $\mathrm{W}_i$'s are distinct $D$-subsets of $[K]$. 
By Lemma~\ref{lem:NCJPLT2}, there exist $n_i\geq 1$ matrices ${\mathrm{V}^{1}_i,\dots,\mathrm{V}^{n_i}_i\in \mathbbm{V}_{I\hspace{-0.04cm}I}}$ for ${i\in [T]}$, each of rank $L$, such that ${H(\mathbf{Z}^{j}_i |\mathbf{A},\mathbf{Q}) = 0}$ for ${j\in [n_i]}$, where ${\mathbf{Z}^{j}_i\triangleq \mathbf{Z}^{[\mathrm{W}_i,\mathrm{V}^{j}_i]}}$. 
Thus, $H(\mathbf{Z}^{1}_1,\dots,\mathbf{Z}^{n_1}_1,\dots,\mathbf{Z}^{1}_{T},\dots,\mathbf{Z}^{n_T}_T|\mathbf{A},\mathbf{Q})=0$.
%\begin{equation}\label{eq:6}
%H(\mathbf{Z}^{1}_1,\dots,\mathbf{Z}^{n_1}_1,\dots,\mathbf{Z}^{1}_{T},\dots,\mathbf{Z}^{n_T}_T|\mathbf{A},\mathbf{Q})=0.    
%\end{equation}
Similarly as in~\eqref{eq:1}-\eqref{eq:3}, %using~\eqref{eq:6} 
we can then show that
\begin{equation}
    H(\mathbf{A})\geq H(\mathbf{Z}^{1}_1,\dots,\mathbf{Z}^{n_1}_1,\dots,\mathbf{Z}^{1}_{T},\dots,\mathbf{Z}^{n_T}_T).\label{eq:7}
\end{equation} 
In the following, we lower bound the right hand-side of~\eqref{eq:7}.   %$H(\mathbf{Z}^{1}_1,\dots,\mathbf{Z}^{n_1}_1,\dots,\mathbf{Z}^{1}_{T},\dots,\mathbf{Z}^{n_T}_T)$, we proceed as follows. 

For any $i\in [T]$ and ${j\in [n_i]}$, let $\mathrm{U}^{j}_{i}$ be the global coefficient matrix of the potential demand $\mathbf{Z}^{j}_i$, i.e., 
the submatrix of $\mathrm{U}^{j}_{i}$ formed by the columns indexed by $\mathrm{W}_i$ is equal to $\mathrm{V}^{j}_{i}$, and the rest of the columns of $\mathrm{U}^{j}_{i}$ are all-zero. 
Note that the rank of $\mathrm{U}^{j}_{i}$ is $L$. 
This is simply because $\mathrm{U}^{j}_{i}$ has $L$ rows, and it contains the matrix $\mathrm{V}^j_i$ of rank $L$ as a submatrix.
Consider the $TL\times K$ matrix formed by vertically concatenating the $L\times K$ matrices $\mathrm{U}^{1}_{1},\dots,\mathrm{U}^{n_1}_{1},\dots,\mathrm{U}^{1}_{T},\dots,\mathrm{U}^{n_T}_{T}$. 
Choose an arbitrary basis of the row space of this $TL\times K$ matrix, and let $\mathrm{M}$ be a matrix formed by the chosen basis. 
It is easy to see that each row of $\mathbf{Z}^{j}_{i}$ can be written as a linear combination of the rows of the $TL\times N$ matrix $\mathrm{M}\mathbf{X}$, where $\mathbf{X}=[\mathbf{X}^{\transpose}_1,\dots,\mathbf{X}^{\transpose}_K]^{\transpose}$ is the $K\times N$ matrix of messages.
Since $\mathbf{X}_1,\dots,\mathbf{X}_K$ are independently and uniformly distributed over $\mathbbmss{F}^{N}_q$, we have
\begin{equation}
    H(\mathbf{Z}^{1}_1,\dots,\mathbf{Z}^{n_1}_1,\dots,\mathbf{Z}^{1}_{T},\dots,\mathbf{Z}^{n_T}_T) = \mathrm{rank}(\mathrm{M})\times B.\label{eq:8}
\end{equation} 
Combining~\eqref{eq:7} and~\eqref{eq:8}, we have $H(\mathbf{A})\geq \mathrm{rank}(\mathrm{M})\times B$. 
Recall that we need show that $H(\mathbf{A})\geq (K-D+L)B$. 
Thus, it suffices to show that $\mathrm{rank}(\mathrm{M})\geq K-D+L$. 
We prove this by the way of contradiction. 

Let $m\triangleq \mathrm{rank}(\mathrm{M})$. 
Note that $m\geq L$. 
This is because by the recoverability condition, the row space of $\mathrm{M}$ must contain the rows of the matrix $\mathrm{U}$, where $\mathrm{U}$ is the global coefficient matrix of the user's demand $\mathbf{Z} = \mathrm{U}\mathbf{X}$, and the rank of $\mathrm{U}$ is $L$. 
Suppose that ${m<K-D+L}$.
Choose an arbitrary basis of the row space of $\mathrm{M}$, and let $\tilde{\mathrm{M}}$ be an $m\times K$ matrix formed by this basis. 
By performing Gauss-Jordan elimination on a properly chosen column-permutation of $\tilde{\mathrm{M}}$, we can obtain a matrix of the form $[\mathrm{I},\mathrm{P}]$, where $\mathrm{I}$ is an $m\times m$ identity matrix, and $\mathrm{P}$ is an $m\times (K-m)$ matrix. 
By the construction of $[\mathrm{I},\mathrm{P}]$, the rows of $\mathrm{U}_i^j$ must be in the row space of $[\mathrm{I},\mathrm{P}]$. 
Without loss of generality, assume that $[\mathrm{I},\mathrm{P}]$ is obtained by performing elimination on $\tilde{\mathrm{M}}$ (instead of a column-permutation of $\tilde{\mathrm{M}}$). 

Next, we prove that $\mathrm{P}$ is an all-zero matrix. 
Let ${S\triangleq C_{K-m,D-L}}$, and let $\hat{\mathrm{W}}_1,\dots,\hat{\mathrm{W}}_S$ be all $(D-L)$-subsets of $[m+1:K]$. 
Without loss of generality, assume that ${\mathrm{W}_{i} = [L] \cup \hat{\mathrm{W}}_i}$ for $i\in [S]$. 
Since $m\geq L$, $[L]$ and $\hat{\mathrm{W}}_i$ are disjoint, and $|\mathrm{W}_{i}| = L+(D-L)=D$. 
For arbitrary ${i\in [S]}$ and ${j\in [n_i]}$, consider the matrix $\mathrm{U}^{j}_{i}$, and the submatrix of $[\mathrm{I},\mathrm{P}]$ formed by the first $L$ rows, denoted by $[\hat{\mathrm{I}},0,\hat{\mathrm{P}}]$, where $\hat{\mathrm{I}}$ is an ${L\times L}$ identity matrix, $0$ is an ${L\times (m-L)}$ all-zero matrix, and $\hat{\mathrm{P}}$ is an ${L\times (K-m)}$ matrix. 
Note that $\mathrm{W}_i$ does not contain any index in ${[L+1:m]}$, and any row of $[\mathrm{I},\mathrm{P}]$ with an index in ${[L+1:m]}$, i.e., any row of $[\mathrm{I},\mathrm{P}]$ that is not included in $[\hat{\mathrm{I}},0,\hat{\mathrm{P}}]$, has a nonzero entry at a distinct column with an index in ${[L+1:m]}$. 
Thus, the rows of $\mathrm{U}^{j}_{i}$ must be in the row space of $[\hat{\mathrm{I}},0,\hat{\mathrm{P}}]$. 
Recall that $\mathrm{U}^{j}_{i}$ and $[\hat{\mathrm{I}},0,\hat{\mathrm{P}}]$ have $L$ rows. 
Since $\mathrm{W}_i = [L]\cup \hat{\mathrm{W}}_i$ and $\hat{\mathrm{W}}_i$ is a $(D-L)$-subset of $[m+1:K]$, the rows of $\mathrm{U}^{j}_{i}$ lie in the row space of $[\hat{\mathrm{I}},0,\hat{\mathrm{P}}]$ iff the submatrix of $\hat{\mathrm{P}}$ formed by the columns indexed by $[m+1:K]\setminus  \hat{\mathrm{W}}_i$ is an all-zero matrix. 
Using the same argument for all ${i\in [S]}$, it follows that $\hat{\mathrm{P}}$ is an all-zero matrix.
Re-defining $\mathrm{W}_1,\dots,\mathrm{W}_S$ by replacing $[L]$ with different $L$-subsets of $[m]$, 
%instead of taking only $[L]$, 
and repeating the same arguments as above, it follows that $\mathrm{P}$ is an all-zero matrix. %, as was to be shown.  

Now, we can simply arrive at a contradiction. 
Recall that by assumption $m<K-D+L$, or equivalently, $K-m\geq D-L$, and $\mathrm{P}$ has $K-m$ columns.  
Without loss of generality, assume that $\mathrm{W}_1 = {[L-1]\cup [m+1:m+D-L+1]}$. 
Note that $|\mathrm{W}_1| = D$. 
Consider the matrix $\mathrm{U}_1^{1}$. 
Recall that $\mathrm{rank}(\mathrm{U}_1^{1})=L$, and the rows of $\mathrm{U}_1^{1}$ must lie in the row space of $[\mathrm{I},\mathrm{P}]$, or particularly, in the row space of the submatrix of $[\mathrm{I},\mathrm{P}]$ formed by the first $L-1$ rows (all rows of $\mathrm{U}_1^{1}$ are linearly independent of the last $m-L+1$ rows of $[\mathrm{I},\mathrm{P}]$).
The rank of this submatrix is $L-1$, and this is a contradiction because $\mathrm{rank}(\mathrm{U}_1^{1})=L$. 
Thus, ${m\geq K-D+L}$.
%This completes the proof.  
\end{proof}

\subsection{Achievability Scheme}\label{subsec:JPLT2-Ach}
In this section, we present a JPLT-II protocol, termed the \emph{Specialized Augmented Code protocol}, which is capacity-achieving for any $q\geq K$. 
An illustrative example of this protocol can be found in Appendix~\ref{app:2}.

The Specialized Augmented Code protocol consists of three steps as described below.
\vspace{0.125cm}

\textbf{Step 1:} Given $\mathrm{W}\in \mathbbm{W}$ and $\mathrm{V}=[\mathrm{v}_1^{\transpose},\dots,\mathrm{v}_l^{\transpose}]^{\transpose}\in \mathbbm{V}_{I\hspace{-0.04cm}I}$, the user constructs a $(K-D+L)\times K$ matrix $\mathrm{G}$, and sends $\mathrm{G}$ as the query $\mathrm{Q}^{[\mathrm{W},\mathrm{V}]}$ to the server. 
To construct $\mathrm{G}$, the user first constructs the global coefficient matrix $\mathrm{U}$ from $\mathrm{V}$.  
Next, the user constructs a ${(K-D+L)\times K}$ matrix $\hat{\mathrm{G}}$ by vertically concatenating the $L\times K$ matrix $\mathrm{U}$ and an arbitrary $(K-D)\times K$ MDS matrix $\mathrm{M}$---generated independently from $\mathrm{W}$ and $\mathrm{V}$.
(For any $q\geq K$, the matrix $\mathrm{M}$ can be constructed as the generator matrix of a $[K,K-D]$ GRS code over $\mathbbmss{F}_q$, with arbitrary nonzero multipliers and distinct evaluation points.)
That is, $\hat{\mathrm{G}} = [\mathrm{U}^{\transpose},\mathrm{M}^{\transpose}]^{\transpose}$. 
Observe that the code generated by $\hat{\mathrm{G}}$ is the result of augmenting the code generated by $\mathrm{U}$ with the codewords of the MDS code generated by $\mathrm{M}$. 
The user then constructs the matrix $\mathrm{G}$ by multiplying the matrix $\hat{\mathrm{G}}$ by a randomly generated $(K-D+L)\times (K-D+L)$ invertible matrix $\mathrm{R}$, i.e., ${\mathrm{G} = \mathrm{R}\hat{\mathrm{G}}}$.
Note that the matrix $\mathrm{G}$ does not necessarily generate an MDS code, and this protocol may not serve as a JPLT-I protocol in general.   

\textbf{Step 2:} Given the query $\mathrm{Q}^{[\mathrm{W},\mathrm{V}]}$, i.e., the matrix $\mathrm{G}$, the server computes the $(K-D+L)\times N$ matrix $\mathrm{Y}\triangleq \mathrm{G}\mathrm{X}$, and sends $\mathrm{Y}$ as the answer $\mathrm{A}^{[\mathrm{W},\mathrm{V}]}$ back to the user.\vspace{0.125cm}  

\textbf{Step 3:} Upon receiving the answer $\mathrm{A}^{[\mathrm{W},\mathrm{V}]}$, i.e., the matrix $\mathrm{Y}$, the user computes the $(K-D+L)\times N$ matrix $\tilde{\mathrm{Y}} \triangleq \mathrm{R}^{-1}\mathrm{Y}$, and recovers the $l$th row of the demand matrix $\mathrm{Z}^{[\mathrm{W},\mathrm{V}]}$, i.e., $\mathrm{v}_l\mathrm{X}_{\mathrm{W}}$, for ${l\in [L]}$, from the $l$th row of $\tilde{\mathrm{Y}}$.

\begin{lemma}\label{lem:JPLT2-Ach}
The Specialized Augmented Code protocol is a JPLT-II protocol, and achieves the rate $L/(K-D+L)$. 
\end{lemma}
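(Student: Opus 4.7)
The plan is to verify three properties of the Specialized Augmented Code protocol: (i) the answer has entropy $(K-D+L)B$, so the rate is $L/(K-D+L)$; (ii) the recoverability condition holds; and (iii) the joint privacy condition holds. The role of the random invertible matrix $\mathrm{R}$ is to randomize the specific generator matrix sent to the server, so that the server effectively learns only the row space $\mathscr{C}$ of $\hat{\mathrm{G}} = [\mathrm{U}^{\transpose},\mathrm{M}^{\transpose}]^{\transpose}$. The main leverage throughout the proof will be the MDS property of $\mathrm{M}$, i.e., the fact that the $[K,K-D]$ code $\mathscr{M}$ it generates has minimum distance $D+1$.

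For the rate, I first show that $\hat{\mathrm{G}}$ has full row rank $K-D+L$: any linear combination of rows of $\mathrm{U}$ has support in $\mathrm{W}$ (a set of size $D$), while any nonzero linear combination of rows of $\mathrm{M}$ is a nonzero codeword of $\mathscr{M}$ and hence has weight at least $D+1$; so a common vector in both row spaces must be zero, and since $\mathrm{rank}(\mathrm{U})=L$ and $\mathrm{rank}(\mathrm{M})=K-D$ the full rank follows. Since $\mathrm{R}$ is invertible, $\mathrm{G}$ also has full rank $K-D+L$, so $H(\mathbf{A})=H(\mathbf{Y})=(K-D+L)B$. For recoverability, $\tilde{\mathrm{Y}} = \mathrm{R}^{-1}\mathrm{Y} = \hat{\mathrm{G}}\mathrm{X}$, and by construction the first $L$ rows of $\hat{\mathrm{G}}$ are precisely $\mathrm{U}$, so the first $L$ rows of $\tilde{\mathrm{Y}}$ equal $\mathrm{U}\mathrm{X}=\mathrm{Z}^{[\mathrm{W},\mathrm{V}]}$.

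The main obstacle is joint privacy. I plan a two-step argument. Step 1 (combinatorial claim): for every $\tilde{\mathrm{W}}\in\mathbbm{W}$, the code $\mathscr{C}$ contains a unique $L$-dimensional subspace $\mathcal{V}_{\tilde{\mathrm{W}}}$ of codewords whose support lies in $\tilde{\mathrm{W}}$, and any basis of $\mathcal{V}_{\tilde{\mathrm{W}}}$, when restricted to the coordinates in $\tilde{\mathrm{W}}$, yields a full-rank $L\times D$ matrix (i.e., an element of $\mathbbm{V}_{I\hspace{-0.04cm}I}$). To prove this, let $\mathrm{H}$ be a parity-check matrix of $\mathscr{C}$; since $\mathscr{M}\subseteq\mathscr{C}$, we have $\mathscr{C}^{\perp}\subseteq\mathscr{M}^{\perp}$, and $\mathscr{M}^{\perp}$ is a $[K,D]$ MDS code with minimum distance $K-D+1$. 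If $\mathrm{rank}(\mathrm{H}|_{\tilde{\mathrm{W}}})<D-L$, there would exist a nonzero row of $\mathrm{H}$ in $\mathscr{C}^{\perp}$ with support in $[K]\setminus\tilde{\mathrm{W}}$, of weight at most $K-D$, contradicting the minimum distance of $\mathscr{M}^{\perp}$. Hence $\mathrm{rank}(\mathrm{H}|_{\tilde{\mathrm{W}}})=D-L$, the dimension of $\mathcal{V}_{\tilde{\mathrm{W}}}$ is $D-(D-L)=L$, and since the codewords in $\mathcal{V}_{\tilde{\mathrm{W}}}$ are linearly independent vectors with support in $\tilde{\mathrm{W}}$, their restriction to $\tilde{\mathrm{W}}$ is a full-rank $L\times D$ matrix.

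Step 2 (counting): apply Bayes. Since $\mathbf{W}$ is uniform on $\mathbbm{W}$, it suffices to show that $\Pr(\mathbf{Q}=\mathrm{G}_0\mid\mathbf{W}=\tilde{\mathrm{W}})$ is the same for all $\tilde{\mathrm{W}}$. Conditioned on $\tilde{\mathrm{W}}$ and on a realization of $\mathrm{M}$, the query $\mathrm{G}=\mathrm{R}\hat{\mathrm{G}}$ equals $\mathrm{G}_0$ iff (a) the row space of $[\mathrm{U}^{\transpose},\mathrm{M}^{\transpose}]^{\transpose}$ coincides with $\mathscr{C}_0\triangleq \mathrm{rowspan}(\mathrm{G}_0)$, in which case $\mathrm{R}$ is uniquely determined. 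By Step 1 and the fact that $\mathcal{V}_{\tilde{\mathrm{W}}}\cap\mathscr{M}=\{0\}$ (again by the minimum distance argument), the set of $\mathrm{V}\in\mathbbm{V}_{I\hspace{-0.04cm}I}$ satisfying (a) is exactly the set of $L\times D$ matrices whose rows, embedded with support in $\tilde{\mathrm{W}}$, form a basis of $\mathcal{V}_{\tilde{\mathrm{W}}}$; the number of such $\mathrm{V}$ equals $|GL_L(\mathbbmss{F}_q)|$, a constant independent of $\tilde{\mathrm{W}}$. Combined with the uniform distributions on $\mathrm{V}$ and $\mathrm{R}$ (and independence from $\mathrm{M}$), this yields $\Pr(\mathbf{Q}=\mathrm{G}_0\mid\mathbf{W}=\tilde{\mathrm{W}})$ independent of $\tilde{\mathrm{W}}$, completing the proof of joint privacy.
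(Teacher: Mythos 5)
Your proposal is correct, and its overall architecture (rate via full row rank of $\hat{\mathrm{G}}$, recoverability via $\mathrm{R}^{-1}\mathrm{Y}=\hat{\mathrm{G}}\mathrm{X}$, privacy via a unique $L$-dimensional subspace of the query code on every $\tilde{\mathrm{W}}$) matches the paper's, but two of your sub-arguments take a genuinely different and in places sharper route. For the unique-subspace claim, the paper argues primally: it uses the invertibility of the $(K-D)\times(K-D)$ submatrix $\mathrm{M}_1$ of the MDS matrix $\mathrm{M}$ on the coordinates $[K]\setminus\tilde{\mathrm{W}}$ to Gauss--Jordan reduce $\hat{\mathrm{G}}$ to the form $\bigl[\begin{smallmatrix}0&\tilde{\mathrm{U}}\\ \mathrm{I}&\tilde{\mathrm{M}}\end{smallmatrix}\bigr]$, reads off an $L$-dimensional subspace on $\tilde{\mathrm{W}}$ from $\mathrm{rank}(\tilde{\mathrm{U}})=L$, and gets uniqueness from the MDS property of $[\mathrm{I},\tilde{\mathrm{M}}]$; you instead argue dually, via $\mathscr{C}^{\perp}\subseteq\mathscr{M}^{\perp}$ and the minimum distance $K-D+1$ of the $[K,D]$ MDS dual, which gives existence, uniqueness, and the full-rank restriction to $\tilde{\mathrm{W}}$ in one stroke (minor wording fix: what the rank-deficiency of $\mathrm{H}|_{\tilde{\mathrm{W}}}$ yields is a nonzero vector in the row space of $\mathrm{H}$, not necessarily a row of $\mathrm{H}$ itself). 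Second, for passing from the subspace claim to the joint privacy condition, the paper contents itself with the informal statement that each of these subspaces is ``equally likely'' to be the demand's subspace from the server's view; your Bayes/counting step --- uniqueness of $\mathrm{R}$ given the row-space coincidence, the count $|\mathrm{GL}_L(\mathbbmss{F}_q)|$ of admissible coefficient matrices per $\tilde{\mathrm{W}}$ (using $\mathcal{V}_{\tilde{\mathrm{W}}}\cap\mathscr{M}=\{0\}$), and the factorization over the independent draws of $\mathrm{V}$, $\mathrm{M}$, $\mathrm{R}$ --- makes this step fully rigorous, at the mild cost of assuming (as is the natural reading of the protocol) that $\mathrm{R}$ is uniform over invertible matrices. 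Your rate argument is also slightly tighter than the paper's: you show the two row spaces intersect trivially, whereas the paper only remarks that the rows of $\mathrm{U}$ do not lie in the row space of $\mathrm{M}$ before concluding rank additivity.
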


\begin{proof}
Similar to Lemma~\ref{lem:JPLT1-Ach}, to prove that the rate of this protocol is $L/(K-D+L)$, it suffice to show that the $(K-D+L)\times K$ matrix $\mathrm{G}$ has full rank, i.e., $\mathrm{rank}(\mathrm{G})=K-D+L$. 
Since $\mathrm{G} = \mathrm{R}\hat{\mathrm{G}}$ and $\mathrm{R}$ is invertible, we need to show that  $\mathrm{rank}(\hat{\mathrm{G}})=K-D+L$. 
Recall that $\hat{\mathrm{G}} = [\mathrm{U}^{\transpose},\mathrm{M}^{\transpose}]^{\transpose}$.
Each row of $\mathrm{U}$ has at most $D$ nonzero entries. 
However, the row space of $\mathrm{M}$ does not contain any row-vector with less than $D+1$ nonzero entries. 
This is because $\mathrm{M}$ generates a $[K,K-D]$ MDS code with the minimum distance ${K-(K-D)+1=D+1}$. 
By these arguments, the rows of $\mathrm{U}$ do not lie in the row space of $\mathrm{M}$, and hence,  $\mathrm{rank}(\hat{\mathrm{G}})=\mathrm{rank}(\mathrm{M})+\mathrm{rank}(\mathrm{U})$.
Obviously, $\mathrm{rank}(\mathrm{M})=K-D$ because $\mathrm{M}$ is a $(K-D)\times K$ MDS matrix, and $\mathrm{rank}(\mathrm{U})=L$ because $\mathrm{U}$ contains $\mathrm{V}$ as a submatrix, and $\mathrm{rank}(\mathrm{V})=L$ (by assumption).
Thus, $\mathrm{rank}(\hat{\mathrm{G}})=K-D+L$, as was to be shown. 

Next, we prove that the joint privacy condition is satisfied. 
To this end, we show that, for any $\tilde{\mathrm{W}}\in \mathbbm{W}$, the row space of $\mathrm{G}$, or equivalently, the row space of $\hat{\mathrm{G}}$, contains a unique $L$-dimensional subspace on the coordinates indexed by $\tilde{\mathrm{W}}$. 
Without loss of generality, assume that $\tilde{\mathrm{W}} ={[K-D+1:K]}$. 
We can rewrite the matrix $\hat{\mathrm{G}}$ as 
\begin{equation*}
\hat{\mathrm{G}} = 
\begin{bmatrix} 
\mathrm{U}_1 & \mathrm{U}_2\\ 
\mathrm{M}_1 & \mathrm{M}_2 
\end{bmatrix},    
\end{equation*} where $\mathrm{U}_1$ (or $\mathrm{M}_1$) is an ${L\times (K-D)}$ submatrix of $\mathrm{U}$ (or $\mathrm{M}$) formed by the columns indexed by $[K]\setminus \tilde{\mathrm{W}}$, and 
$\mathrm{U}_2$ (or $\mathrm{M}_2$) is an $L\times D$ submatrix of $\mathrm{U}$ (or $\mathrm{M}$) formed by the columns indexed by $\tilde{\mathrm{W}}$.
Since $\mathrm{M}_1$ is a $(K-D)\times (K-D)$ submatrix of $\mathrm{M}$, and $\mathrm{M}$ is a $(K-D)\times K$ MDS matrix, the row space of $\mathrm{M}_1$ is a $(K-D)$-dimensional subspace on the $K-D$ coordinates indexed by $[K]\setminus \tilde{\mathrm{W}}$. 
This implies that each row of $\mathrm{U}_1$ can be written as a unique linear combination of the rows of $\mathrm{M}_1$. 
Thus, by performing Gauss-Jordan elimination on the matrix $\hat{\mathrm{G}}$, we can obtain a $(K-D+L)\times K$ matrix $\tilde{\mathrm{G}}$ given by
\begin{equation*}
\tilde{\mathrm{G}}=
\begin{bmatrix} 
0 & \tilde{\mathrm{U}}\\ 
\mathrm{I} & \tilde{\mathrm{M}} 
\end{bmatrix},    
\end{equation*} where $0$ is an ${L\times (K-D)}$ all-zero matrix, $\mathrm{I}$ is a ${(K-D)\times (K-D)}$ identity matrix, $\tilde{\mathrm{U}}$ is an ${L\times D}$ matrix, and $\tilde{\mathrm{M}}$ is a ${(K-D)\times D}$ matrix.
Note that $\mathrm{rank}(\tilde{\mathrm{G}})=\mathrm{rank}(\hat{\mathrm{G}})=K-D+L$, and $\mathrm{rank}(\tilde{\mathrm{G}})=\mathrm{rank}(\mathrm{I})+\mathrm{rank}(\tilde{\mathrm{U}})= (K-D)+\mathrm{rank}(\tilde{\mathrm{U}})$. 
Thus, ${\mathrm{rank}(\tilde{\mathrm{U}})=L}$.
This implies that the row space of the $L\times K$ matrix $[0,\tilde{\mathrm{U}}]$ is an $L$-dimensional subspace on the coordinates indexed by $\tilde{\mathrm{W}}=[K-D+1:K]$. 
Moreover, this subspace is unique because the $(K-D)\times K$ matrix $[\mathrm{I},\tilde{\mathrm{M}}]$ is MDS. 
Note, also, that from the perspective of the server, each of these $L$-dimensional subspaces (corresponding to a distinct $\tilde{\mathrm{W}}\in \mathbbm{W}$) is equally likely to be the subspace spanned by the rows of the demand's global coefficient matrix $\mathrm{U}$. 
Thus, given the query (i.e., the matrix $\mathrm{G}$), every $\tilde{\mathrm{W}}\in \mathbbm{W}$ is equally likely to be the demand's support. 
This completes the proof of joint privacy. 

The proof of recoverability is straightforward. 
By Step~3 of the protocol, $\tilde{\mathrm{Y}}=\mathrm{R}^{-1}\mathrm{Y}$. 
Rewriting $\mathrm{Y}$ as $\mathrm{G}\mathrm{X}=\mathrm{R}\hat{\mathrm{G}}\mathrm{X}$, it follows that $\tilde{\mathrm{Y}}=\hat{\mathrm{G}}\mathrm{X}=[(\mathrm{U}\mathrm{X})^{\transpose},(\mathrm{M}\mathrm{X})^{\transpose}]^{\transpose}$.
This shows that the $L\times N$ submatrix of $\tilde{\mathrm{Y}}$ formed by the first $L$ rows is equal to the demand matrix $\mathrm{U}\mathrm{X}=\mathrm{V}\mathrm{X}_{\mathrm{W}}$.
\end{proof}

\section{Conclusion and Future Work}\label{sec:Con}
In this work, we introduced the problem of Private Linear Transformation (PLT) which generalizes the Private Information Retrieval (PIR) and Private Linear Computation (PLC) problems. 
The PLT problem includes a dataset that is stored on a single (or multiple) remote server(s), and a user who wishes to compute multiple linear combinations of a subset of items belonging to the dataset.
The goal is to perform the computation such that the total amount of information downloaded is minimized, while the identities of items required for the computation are kept private.

We focused on the single-server setting of the PLT problem with joint privacy guarantees, referred to as the JPLT problem. 
The notion of joint privacy ensures that the identities of all items required for the computation are protected jointly. % referred to as the JPLT problem. 
%In this problem, referred to as the JPLT problem, the privacy requirement is to protect the identities of all items required for the computation jointly. 
We considered two different models, depending on whether the coefficient matrix of the required linear combinations is MDS. 
For each model, we characterized the capacity, where the capacity is defined as the supremum of all achievable download rates. 
In addition, we presented a capacity-achieving scheme for each of the models being considered.  

There remain several open problems---closely related to the JPLT problem. 
Below, we list a few of these problems.  
\begin{enumerate}
    %\item The capacity of JPLT for the settings in which the coefficient matrix of the required linear combinations has full rank and no all-zero column remains unknown. 
    %Such matrices are particularly of interest in the scenarios where 
    \item It was recently shown that, as compared to the single-server setting, PIR and PLC can be performed much more efficiently (in terms of the download rate) when there are multiple servers that store identical copies or coded versions of the dataset, see, e.g.,~\cite{SJ2017,BU2018,SJ2018,SJ2018No2,TGKHHER2017,TER2017,TER2017,BU18,OK2018,OLRK2018}. 
    %Similar results were also shown for the cases in which the servers can collude with some physical limitations (see, e.g.,~\cite{SJ2018No2,TGKHHER2017,TER2017}), and the cases in which each server %(without any collusion or with limited capability for collusion)
    %that 
    %stores %identical copies or 
    %a coded version of the messages in the dataset (see, e.g.,~\cite{TER2017,BU18,OK2018,OLRK2018}). 
    %These results motivate the study of PLT in the multi-server settings.
    Motivated by these results, an important direction for research is to characterize the capacity of multi-server PLT with joint privacy guarantees.
    \item Establishing the fundamental limits of (single-server or multi-server) PLT with joint privacy in the presence of a prior side information is another direction for future work. 
    This is motivated by the recent developments in PIR and PLC with side information, see, e.g.,~\cite{KGHERS2020,HKS2019Journal,HKRS2019,KKHS32019,HKGRS2018,LG2018,HKS2018,HKS2019,T2017,WBU2018,WBU2018No2,CWJ2020,SSM2018,KKHS22019,KKHS12019}.
    %\item Several new types of information-theoretic privacy were recently studied for the PIR problem in~\cite{LKRAY2020,SATL2020_2,TKKN2021}. 
    %These works motivate the study of the PLC problem (and more generally, the PLT problem) under these types of 
    \item Many machine learning and cloud computing algorithms require computing non-linear functions on a subset of dataset. 
    For instance, evaluating polynomials on a subset of training samples finds application in distributed stochastic gradient descent for linear regression~\cite{LCC}. 
    The need for protecting the data access privacy in such scenarios motivates the problem of designing efficient privacy-preserving schemes for non-linear function computation. 
\end{enumerate}

\appendix %[Illustrative Examples of the Proposed Protocols]

\subsection{An Example of the Specialized MDS Code Protocol}\label{app:1}

%\begin{example}
%\normalfont 
Consider  a  scenario in which the server has ${K=10}$ messages $\mathrm{X}_1,\dots,\mathrm{X}_{10}\in\mathbbmss{F}^{N}_{11}$ for an arbitrary integer $N\geq 1$, and the user wishes to compute ${L=2}$ linear combinations of ${D=5}$ messages $X_2$, $X_4$, $X_5$, $X_7$, $X_8$, say, 
%\begin{align*}
%Z_1 &= X_2+3X_4+2X_5+X_7+6X_8,\\
%Z_2 &= 3X_2+10X_4+7X_5+4X_7+8X_8.
%\end{align*}
$Z_1 = X_2+3X_4+2X_5+X_7+6X_8$, and
$Z_2 = 3X_2+10X_4+7X_5+4X_7+8X_8$. 
For this example, $\mathrm{W}=\{2,4,5,7,8\}$, and %$\mathrm{V}$ is given by
\begin{equation*}
\mathrm{\mathrm{V}} = 
\begin{bmatrix}
1 & 3 & 2 & 1 & 6\\
3 & 10 & 7 & 4 & 8\\
\end{bmatrix}.
\end{equation*}
%$\mathrm{V}=[\mathrm{v}_{1},\mathrm{v}_{2}]^{\transpose}$ 
%where $\mathrm{v}_{2}=[3,10,7,4,8]$.  
It is easy to verify that $\mathrm{V}$ generates a $[5,2]$ GRS code with the multipliers $\{v_{1},\dots,v_{5}\}=\{1,3,2,1,6\}$ and the evaluation points $\{\omega_{1},\dots,\omega_{5}\}=\{3,7,9,4,5\}$. Then, the user obtains the parity-check matrix $\myLambda$ of the code generated by $\mathrm{V}$ as 
\begin{equation*}
\mathrm{\myLambda} = 
\begin{bmatrix}
3 & 10 & 8 & 8 & 7\\
9 & 4 & 6 & 10 & 2\\
5 & 6 & 10 & 7 & 10 \\
\end{bmatrix}.
\end{equation*}
Note that $\myLambda$ generates a $[5,3]$ MDS code with the multipliers $\{\lambda_1,\dots,\lambda_5\} = \{3,10,8,8,7\}$ and the evaluation points $\{\omega_1,\dots,\omega_5\} = \{3,7,9,4,5\}$. 

Next, the user extends the ${3\times 5}$ matrix $\myLambda$ to a ${3\times 10}$ matrix $\mathrm{H}$ that satisfies the conditions (i) and (ii) specified in Step 1 of the Specialized GRS Code protocol. 
%Let $\mathrm{v}_{1}=[1,3,2,1,6]$
Suppose the user randomly chooses $6$ additional multipliers $\{\lambda_{6},\dots,\lambda_{10}\}=\{3,5,1,1,4\}$ from ${\mathbbmss{F}_{11}\setminus \{0\}}$, and $6$ additional evaluation points $\{\omega_{6},\dots,\omega_{10}\}=\{6,1,10,2,8\}$ from ${\mathbbmss{F}_{11}\setminus \{\omega_1,\dots,\omega_{5}\}}$. Followed by constructing a permutation $\pi$ as described in Step 1 of the Specialized GRS Code protocol, say, $\{\pi(1),\dots,\pi(10)\}=\{2,4,5,7,8,1,3,6,9,10\}$, the user constructs the extended matrix $\mathrm{H}$ as 
%In this example, $\mathrm{W}=\{i_{1},\dots,i_{5}\}=\{2,4,5,7,8\}$ and hence $[K]\setminus \mathrm{W}=\{1,3,6,9,10\}$ where $[K]=\{1,\dots,10\}$. Let $\pi(1),\dots,\pi(10)=i_{1},\dots,i_{10}=2,4,5,7,8,1,3,6,9,10$. Then, the parity check matrix $H$ is
\begin{equation*}
\mathrm{H} = 
\begin{bmatrix}
3 & \mathbf{3} & 5 & \mathbf{10} & \mathbf{8} & 1 & \mathbf{8} & \mathbf{7} & 1 & 4\\
7 & \mathbf{9} & 5 & \mathbf{4} & \mathbf{6} & 10 & \mathbf{10} & \mathbf{2} & 2 & 10\\
9 & \mathbf{5} & 5 & \mathbf{6} & \mathbf{10} & 1 & \mathbf{7} & \mathbf{10} & 4 & 3
\end{bmatrix},
\end{equation*} where the columns of $\mathrm{H}$ indexed by $\pi(1)$, $\pi(2)$, $\pi(3)$, $\pi(4)$, $\pi(5)$ (i.e., the columns $2,4,5,7,8$) correspond to the columns $1,2,3,4,5$ of $\myLambda$, respectively, and the columns of $\mathrm{H}$ indexed by $\pi(6)$, $\pi(7)$, $\pi(8)$, $\pi(9)$, $\pi(10)$ (i.e., the columns $1,3,6,9,10$) correspond to the columns of the generator matrix of a $[5,3]$ GRS code with the multipliers $\{\lambda_6,\dots,\lambda_{10}\}$ and the evaluation points  $\{\omega_6,\dots,\omega_{10}\}$. 
That is, the $\pi(i)$th column of $\mathrm{H}$ for $i\in \{6,\dots,10\}$ is given by $[\lambda_{i},\lambda_i\omega_i,\lambda_i\omega_i^2]^{\transpose}$. %For instance, the column $\pi(6)$ (i.e., the column $1$) of $\mathrm{H}$ is given by $[\beta_{6},\beta_6\omega_6,\beta_6\omega_6^2]^{\transpose} = [3,7,9]^{\transpose}$. 
Since $\mathrm{H}$ generates a $[10,3]$ GRS code with the multipliers $\{\lambda_6,\lambda_1,\lambda_7,\lambda_2,\lambda_3,\lambda_8,\lambda_4,\lambda_5,\lambda_9,\lambda_{10}\}$ and the evaluation points $\{\omega_6,\omega_1,\omega_7,\omega_2,\omega_3,\omega_8,\omega_4,\omega_5,\omega_9,\omega_{10}\}$, $\mathrm{H}$ can be thought of as the parity-check matrix of a $[10,7]$ GRS code with the multipliers $\alpha_6=9$, $\alpha_1=10$, $\alpha_7=2$, $\alpha_2=7$, $\alpha_3=3$, $\alpha_8=1$, $\alpha_4=5$, $\alpha_5=4$, $\alpha_9=9$, $\alpha_{10}=9$ and the evaluation points $\omega_6=6$, $\omega_1=3$, $\omega_7=1$, $\omega_2=7$, $\omega_3=9$, $\omega_8=10$, $\omega_4=4$, $\omega_5=5$, $\omega_9=2$, $\omega_{10}=8$. (The process of computing $\alpha_i$'s is explained in Step 1 of the Specialized GRS Code protocol.) The user then obtains the generator matrix $\mathrm{G}$ of this code, 
%In this example, $\{\alpha_{1},\dots,\alpha_{10}\}=\{10,7,4,5,4,9,2,1,4,4\}$ and $\{\omega_{1},\dots,\omega_{10}\}=\{6,3,1,7,9,10,4,5,2,8\}$
\begin{equation*}
\mathrm{G} = 
\begin{bmatrix}
9 & 10 & 2 & 7 & 3 & 1 & 5 & 4 & 9 & 9\\
10 & 8 & 2 & 5 & 5 & 10 & 9 & 9 & 7 & 6\\
5 & 2 & 2 & 2 & 1 & 1 & 3 & 1 & 3 & 4 \\
8 & 6 & 2 & 3 & 9 & 10 & 1 & 5 & 6 & 10\\
4 & 7 & 2 & 10 & 4 & 1 & 4 & 3 & 1 & 3\\
2 & 10 & 2 & 4 & 3 & 10 & 5 & 4 & 2 & 2\\
1 & 8 & 2 & 6 & 5 & 1 & 9 & 9 & 4 & 5\\
\end{bmatrix}.
\end{equation*} Then, the user sends the matrix $\mathrm{G}$ as the query to the server. The server then computes the matrix ${\mathrm{Y}= \mathrm{G}\mathrm{X}}$, and sends it back to the user. 
\begin{comment}
\begin{equation*}
\mathrm{y} = 
\begin{bmatrix}
9\mathrm{X}_{1} + 10\mathrm{X}_{2} + 2\mathrm{X}_{3} + 7\mathrm{X}_{4} + 4\mathrm{X}_{5} + \mathrm{X}_{6} + 5\mathrm{X}_{7} + 4\mathrm{X}_{8} + 4\mathrm{X}_{9} + 4\mathrm{X}_{10}\\
10\mathrm{X}_{1} + 8\mathrm{X}_{2} + 2\mathrm{X}_{3} + 5\mathrm{X}_{4} + 3\mathrm{X}_{5} + 10\mathrm{X}_{6} + 9\mathrm{X}_{7} + 9\mathrm{X}_{8} + 8\mathrm{X}_{9} + 10\mathrm{X}_{10}\\
5\mathrm{X}_{1} + 2\mathrm{X}_{2} + 2\mathrm{X}_{3} + 2\mathrm{X}_{4} + 5\mathrm{X}_{5} + \mathrm{X}_{6} + 3\mathrm{X}_{7} + \mathrm{X}_{8} + 5\mathrm{X}_{9} + 3\mathrm{X}_{10} \\
8\mathrm{X}_{1} + 6\mathrm{X}_{2} + 2\mathrm{X}_{3} + 3\mathrm{X}_{4} + 1\mathrm{X}_{5} + 10\mathrm{X}_{6} + \mathrm{X}_{7} + 5\mathrm{X}_{8} + 10\mathrm{X}_{9} + 2\mathrm{X}_{10}\\
4\mathrm{X}_{1} + 7\mathrm{X}_{2} + 2\mathrm{X}_{3} + 10\mathrm{X}_{4} + 9\mathrm{X}_{5} + \mathrm{X}_{6} + 4\mathrm{X}_{7} + 3\mathrm{X}_{8} + 9\mathrm{X}_{9} + 5\mathrm{X}_{10}\\
2\mathrm{X}_{1} + 10\mathrm{X}_{2} + 2\mathrm{X}_{3} + 4\mathrm{X}_{4} + 4\mathrm{X}_{5} + 10\mathrm{X}_{6} + 5\mathrm{X}_{7} + 4\mathrm{X}_{8} + 7\mathrm{X}_{9} + 7\mathrm{X}_{10}\\
1\mathrm{X}_{1} + 8\mathrm{X}_{2} + 2\mathrm{X}_{3} + 6\mathrm{X}_{4} + 3\mathrm{X}_{5} + \mathrm{X}_{6} + 9\mathrm{X}_{7} + 9\mathrm{X}_{8} + 3\mathrm{X}_{9} + \mathrm{X}_{10}\\
\end{bmatrix}.
\end{equation*}
\end{comment}
Next, the user constructs two polynomials 
\begin{align*}
f_1(x)& ={(x-\omega_6)(x-\omega_7)(x-\omega_8)(x-\omega_9)(x-\omega_{10})}\\
& = {(x-6)(x-1)(x-10)(x-2)(x-8)},    
\end{align*}
and $f_2(x)=x f_1(x)$. The coefficient vectors of the polynomials $f_1(x)$ and $f_2(x)$ are given by $\mathrm{c}_1=[8,1,8,9,6,1,0]^{\transpose}$ and $\mathrm{c}_2=[0,8,1,8,9,6,1]^{\transpose}$, respectively.
% $\mathrm{c}_1$ and $\mathrm{c}_2$
%$f_1(x)=x^{5}+6x^{4}+9x^{3}+8x^{2}+x^{1}+8$ 
%f_2(x)=x^{6}+6x^{5}+9x^{4}+8x^{3}+1x^{2}+8x^{1}$.
The user then recovers their demand, i.e., $Z_1$ and $Z_2$, by computing %$\mathrm{c}_{1}^{\transpose}\mathrm{y}$ and $\mathrm{c}_{2}^{\transpose}\mathrm{y}$. It is easy to verify that  
\begin{align*}
& Z_1 = \mathrm{c}_{1}^{\transpose}\mathrm{Y} = X_{2}+3X_{4}+2X_{5}+X_{7}+6X_{8},\\
& Z_2 = \mathrm{c}_{2}^{\transpose}\mathrm{Y}={3X_{2}+10X_{4}+7X_{5}+4X_{7}+8X_{8}}.
\end{align*} For this example, the rate of the Specialized MDS Code protocol is ${L}/{(K-D+L)} = 2/7$, whereas the rate of a PIR-based scheme or a PLC-based scheme is ${L}/{K} = {2}/{10}$ or ${1}/{(K-D)} = {1}/{5}$, respectively.    
%$Z_1 = \mathrm{c}_{1}^{\transpose}\mathrm{y}={X_{2}+3X_{4}+2X_{5}+X_{7}+6X_{8}}$, and $Z_2 = \mathrm{c}_{2}^{\transpose}\mathrm{y}={3X_{2}+10X_{4}+7X_{5}+4X_{7}+8X_{8}}$.  %As we can see, these two linear combinations are exactly $\mathrm{v}^{\transpose}_1\mathrm{X}_{\mathrm{W}}$ and $\mathrm{v}^{\transpose}_2\mathrm{X}_{\mathrm{W}}$ as was desired. Hence, the user successfully recovers the two coded combinations (of the demand).
%\end{example}

\subsection{An Example of the Specialized Augmented Code Protocol}\label{app:2}

%\begin{example}
%\normalfont 
Consider  a  scenario in which the server has ${K=10}$ messages $X_1,\dots,X_{10}\in\mathbbmss{F}^{N}_{11}$ for an arbitrary integer $N\geq 1$, and the user wants to compute ${L=2}$ linear combinations of ${D=5}$ messages $X_2$, $X_4$, $X_5$, $X_7$, $X_8$, say, $Z_1 = 3X_2+X_4+6X_5+2X_7+6X_8$ and
$Z_2 = 10X_2+4X_4+8X_5+7X_7+9X_8$. For this example, $\mathrm{W}=\{2,4,5,7,8\}$, and 
\begin{equation*}
\mathrm{V} = 
\begin{bmatrix}
3 & 1 & 6 & 2 & 6\\
10 & 4 & 8 & 7 & 9\\
\end{bmatrix}.
\end{equation*}
Note that $\mathrm{V}$ has full rank, but it is not MDS.
First, the user constructs the demand's global coefficient matrix $\mathrm{U}$ as
\begin{equation*}
\mathrm{U} = 
\begin{bmatrix}
0 & 3 & 0 & 1 & 6 & 0 & 2 & 6 & 0 & 0\\
0 & 10 & 0 & 4 & 8 & 0 & 7 & 9 & 0 & 0\\
\end{bmatrix}.
\end{equation*}
%The columns of $\mathrm{U}$ indexed by $\mathrm{W}$ are exactly the columns of the matrix $\mathrm{V}$, and the rest of the columns of $\mathrm{U}$ are all-zero. 
Next, the user generates an arbitrary $5\times10$ MDS matrix $\mathrm{M}$, independently from $\mathrm{W}$ and $\mathrm{V}$. 
For this example, suppose the matrix $\mathrm{M}$ is given by
\begin{equation*}
\mathrm{M} = 
\begin{bmatrix}
2 & 1 & 4 & 7 & 9 & 1 & 10 & 5 & 4 & 3\\
6 & 5 & 3 & 5 & 3 & 6 & 10 & 6 & 10 & 6\\
7 & 3 & 5 & 2 & 1 & 3 & 10 & 5 & 3 & 1\\
10 & 4 & 1 & 3 & 4 & 7 & 10 & 6 & 2 & 2\\
8 & 9 & 9 & 10 & 5 & 9 & 10 & 5 & 5 & 4\\
\end{bmatrix}.
\end{equation*}
The user then constructs a $7\times 10$ matrix $\hat{\mathrm{G}}$ by vertically concatenating the matrices $\mathrm{U}$ and $\mathrm{M}$, i.e., $\hat{\mathrm{G}} = [\mathrm{U}^{\transpose},\mathrm{M}^{\transpose}]^{\transpose}$, 
\begin{equation*}
\hat{\mathrm{G}} = 
\begin{bmatrix}
0 & 3 & 0 & 1 & 6 & 0 & 2 & 6 & 0 & 0\\
0 & 10 & 0 & 4 & 8 & 0 & 7 & 9 & 0 & 0\\
2 & 1 & 4 & 7 & 9 & 1 & 10 & 5 & 4 & 3\\
6 & 5 & 3 & 5 & 3 & 6 & 10 & 6 & 10 & 6\\
7 & 3 & 5 & 2 & 1 & 3 & 10 & 5 & 3 & 1\\
10 & 4 & 1 & 3 & 4 & 7 & 10 & 6 & 2 & 2\\
8 & 9 & 9 & 10 & 5 & 9 & 10 & 5 & 5 & 4\\
\end{bmatrix}.
\end{equation*}
Then, the user randomly generates a $7\times7$ invertible matrix $\mathrm{R}$, and  
%For this example, suppose that $\mathrm{R}$ is given by 
%\begin{equation*}
%\mathrm{{R}} = 
%\begin{bmatrix}
%3 & 9 & 7 & 2 & 1 & 0 & 5\\
%2 & 6 & 2 & 6 & 1 & 3 & 6\\
%9 & 4 & 3 & 0 & 4 & 4 & 4\\
%4 & 8 & 7 & 2 & 7 & 9 & 1\\
%1 & 3 & 7 & 10 & 7 & 8 & 5\\
%9 & 5 & 7 & 8 & 4 & 2 & 5\\
%8 & 5 & 6 & 9 & 4 & 8 & 8\\
%\end{bmatrix}.
%\end{equation*}
%where 
\begin{comment}
\begin{equation*}
\mathrm{R^{-1}} = 
\begin{bmatrix}
4 & 7 & 0 & 0 & 4 & 5 & 10\\
2 & 10 & 3 & 2 & 10 & 2 & 4\\
3 & 5 & 7 & 2 & 0 & 10 & 5\\
2 & 1 & 1 & 9 & 6 & 0 & 5\\
4 & 2 & 6 & 5 & 7 & 8 & 5\\
7 & 4 & 5 & 8 & 7 & 5 & 5\\
6 & 3 & 3 & 0 & 0 & 1 & 1\\
\end{bmatrix},
\end{equation*}
\end{comment}
%Next, the user 
constructs a $7\times 10$ matrix ${\mathrm{G}} = \mathrm{R}\hat{\mathrm{G}}$. 
For this example, suppose that $\mathrm{G}$ is given by
\begin{equation*}
\mathrm{{G}} = 
\begin{bmatrix}
7 & 10 & 7 & 7 & 9 & 1 & 10 & 0 & 10 & 10\\
4 & 2 & 0 & 9 & 7 & 6 & 6 & 0 & 8 & 7\\
7 & 2 & 6 & 7 & 10 & 2 & 9 & 4 & 8 & 4\\
8 & 10 & 10 & 3 & 7 & 2 & 5 & 6 & 4 & 7\\
1 & 1 & 3 & 2 & 0 & 2 & 8 & 5 & 3 & 3\\
7 & 2 & 9 & 6 & 9 & 5 & 5 & 8 & 6 & 9\\
7 & 10 & 8 & 7 & 3 & 2 & 5 & 10 & 6 & 3\\
\end{bmatrix}.
\end{equation*}
Next, the user sends ${\mathrm{G}}$ to the server. 
%Note that the matrix $\mathrm{G}$, unlike the case of the Specialized MDS Code protocol, does not necessarily generate an MDS code. 
Given $\mathrm{G}$, the server computes the matrix $\mathrm{Y} = \mathrm{G}\mathrm{X}$, where $\mathrm{X}=[X^{\transpose}_1,\dots,X^{\transpose}_{10}]^{\transpose}$, 
\begin{comment}
\begin{equation*}
\mathrm{{Y}} = 
\begin{bmatrix}
6 & 10 & 7 & 7 & 9 & 1 & 10 & 0 & 10 & 10\\
4 & 2 & 0 & 9 & 7 & 6 & 6 & 0 & 8 & 7\\
10 & 2 & 6 & 7 & 10 & 2 & 9 & 4 & 8 & 4\\
7 & 10 & 10 & 3 & 7 & 2 & 5 & 6 & 4 & 7\\
1 & 1 & 3 & 2 & 0 & 2 & 8 & 5 & 3 & 3\\
9 & 2 & 9 & 6 & 9 & 5 & 5 & 8 & 6 & 9\\
2 & 10 & 8 & 7 & 3 & 2 & 5 & 10 & 6 & 3\\
\end{bmatrix}
 \begin{bmatrix}
X_{1}\\
X_{2}\\
X_{3}\\
X_{4}\\
X_{5}\\
X_{6}\\
X_{7}\\
X_{8}\\
X_{9}\\
X_{10}\\
\end{bmatrix}
\end{equation*}
\end{comment}
and sends $\mathrm{Y}$ back to the user. 
Given the matrix $\mathrm{Y}$, the user recovers their demand matrix  $[Z^{\transpose}_1,Z^{\transpose}_2]^{\transpose} = \mathrm{V}\mathrm{X}_{\mathrm{W}} = \mathrm{U}\mathrm{X}$ from the matrix formed by the first $2$ rows of the matrix $\tilde{\mathrm{Y}}=\mathrm{R}^{-1}\mathrm{Y} = (\mathrm{R}^{-1}\mathrm{G})\mathrm{X} = \hat{\mathrm{G}}\mathrm{X} = [(\mathrm{U}\mathrm{X})^{\transpose},(\mathrm{M}\mathrm{X})^{\transpose}]^{\transpose}$. 

For this example, the rate of the Specialized Augmented Code protocol is ${L}/{(K-D+L)} = {2}/{7}$, whereas the rate of a PIR-based scheme or a PLC-based scheme is ${L}/{K} = {2}/{10}$ or ${1}/{(K-D)} = {1}/{5}$, respectively.
%\end{example}

\bibliographystyle{IEEEtran}
\bibliography{PIR_PC_Refs}

\end{document}